\newtheorem{theorem}{Theorem}
\newtheorem{lemma}{Lemma}
\newcommand{\presec}{\vspace{-0.00in}}
\newcommand{\postsec}{\vspace{0.00in}}
\begin{document}

\title{Locally Self-Adjusting Skip Graphs}

\author{\IEEEauthorblockN{Sikder Huq, Sukumar Ghosh}
\IEEEauthorblockA{Department of Computer Science,
The University of Iowa}}

\maketitle
\begin{abstract}

We present a distributed self-adjusting algorithm for skip graphs that minimizes the average routing costs between arbitrary communication pairs by performing topological adaptation to the communication pattern. Our algorithm is fully decentralized, conforms to the $\mathcal{CONGEST}$ model (i.e. uses $O(\log n)$ bit messages), and requires $O(\log n)$ bits of memory for each node, where $n$ is the total number of nodes. Upon each communication request, our algorithm first establishes communication by using the standard skip graph routing, and then locally and partially reconstructs the skip graph topology to perform topological adaptation. We propose a computational model for such algorithms, as well as a yardstick (working set property) to evaluate them. Our working set property can also be used to evaluate self-adjusting algorithms for other graph classes where multiple tree-like subgraphs overlap (e.g. hypercube networks). We derive a lower bound of the amortized routing cost for any algorithm that follows our model and serves an unknown sequence of communication requests. We show that the routing cost of our algorithm is at most a constant factor more than the amortized routing cost of any algorithm conforming to our computational model. We also show that the expected transformation cost for our algorithm is at most a logarithmic factor more than the amortized routing cost of any algorithm conforming to our computational model.

\end{abstract}

%We design a distributed approximate median finding algorithm for our skip graph model. Given a linked list where each node holds a value, our median finding algorithm finds an approximate median in expected logarithmic time. The median finding algorithm is used by our self-adjusting algorithm to ensure that the height of the Skip Graph is always $O(\log n)$.

%We also propose a balance property for skip graphs that our self-adjusting algorithm never violates. 
%\vfill\eject
\presec
\section{Introduction} \label{sec: introduction}
\postsec
%\vspace{0.05in} \noindent \textbf{Motivation \& Background.}
%\IEEEPARstart{A}
Many peer-to-peer communication topologies are designed to reduce the worst-case time per operation and do not take advantage of the skew in communication patterns.
Given that most real-world communication patterns are skewed, self-adjustment is an attractive tool that can significantly reduce the average communication cost for a sequence of communications.
For an unknown sequence of communications, self-adjusting algorithms minimize the average communication cost by performing topological adaptation to the communication pattern.
%
%Upon each communication request, self-adjusting algorithms partially (and perhaps locally) restructures the topology to improve the efficiency of future communications.

In the 1980s, Sleator and Tarjan published their seminal work on \emph{Splay Tree} \cite{SleatorTarjan}, which has been the inspiration of subsequent studies in self-adjusting algorithms, for example \emph{Tango} BSTs \cite{Tango}, \emph{multi-splay trees} \cite{MST}, \emph{CBTree} \cite{Afek}, dynamic skip list \cite{Bose} etc.
All these data structures are designed for centralized lookup operations and rely on a tree or tree-like structure.
Avin et al. proposed \emph{SplayNets} \cite{splayNet}, which initiated the study of self-adjustment for distributed data structures and networks, where each communication involves a source-destination pair.
However, splayNet relies on a single BST structure, and we are not aware of any study of self-adjustment for more complex data structures that relies on the interaction of multiple overlapping tree-like structures (e.g. skip graphs, hypercubic networks).
This motivates our current work on self-adjustment for skip graph topologies.

A skip graph \cite{SG} $G=(V,E)$ is a distributed data structure and a well-known peer-to-peer communication topology that guarantees $O(\log n)$ worst-case communication time between arbitrary pairs of nodes, where $n=|V|$.
The major advantage of skip graphs over BSTs is that skip graphs are highly resilient and capable of tolerating a large fraction of node failures.
In order to achieve such resilience, skip graphs rely on interactions among $n$ overlapping skip list structures.
In general, topological rearrangement of nodes of one skip list affects the structure of multiple other skip lists.
Moreover, since the access pattern in unknown, an adversarial access sequence may incur the worst case communication cost for each of the communication requests.
Thus it is important to keep all the skip lists balanced so that the worst case communication cost for any pair of nodes remains logarithmic.
Now, self-adjusting algorithms generally attempt to move frequently communicating nodes closer to each other.
However, for skip graphs, such an attempt may result in an imbalance situation and drive other uninvolved nodes away from each other, which makes it challenging to design a self-adjusting algorithm for skip graphs.

We present a self-adjusting algorithm \emph{Dynamic Skip Graphs} (\textsf{DSG}) for skip graphs with no {\em a priori} knowledge of the future communication pattern, and analyze the performance of our algorithm.
Upon each communication request, \textsf{DSG} first establishes communication using the the standard skip graph routing in the existing topology, and then \emph{locally} and \emph{partially} transforms the topology to connect communicating nodes via a direct link.
Our algorithm \textsf{DSG} is designed for the $\mathcal{CONGEST}$ model (i.e. allowed message size per link per round is up to $O(\log n)$ bits), and requires $O(\log n)$ bits of memory for each node.
We show that, for an unknown communication sequence, the routing cost for \textsf{DSG} is at most a constant factor more than the optimal amortized routing cost, and the expected transformation cost is at most a logarithmic factor more than the amortized cost of any algorithm conforming to our computational model.

%\vspace{-0.1in}

\subsection{Our Contributions}

\begin{enumerate}

\item We propose a computational model for self-adjusting skip graphs. Upon each communication request, our model requires any algorithm to transform the topology such that communicating nodes (the source-destination pair) get connected via a direct link. Our model also limits the memory of each node to $O(\log n)$ bits and conforms to the $\mathcal{CONGEST}$ model.

\item We propose a working set property to evaluate self-adjusting algorithms for skip graphs or similar distributed data structures.

\item We propose a self-adjusting algorithm \emph{Dynamic Skip Graphs} (\textsf{DSG}) conforming to our model and analyze its performance.

\item  Our algorithm uses a distributed and randomized approximate median finding algorithm (\textsf{AMF}) designed for skip graphs. We show that \textsf{AMF} finds an approximate median in expected $O (\log n)$ rounds.

\end{enumerate}

\subsection{Paper organization}

Section \ref{sec:model} presents our self-adjusting model for skip graphs and definitions relevant to this paper. Section \ref{sec:related_work} presents an overview of related work. Section \ref{sec:dsg} and \ref{sec:amf} present algorithm \textsf{DSG} and \textsf{AMF}, respectively. We provide a formal analysis to evaluate our algorithms in Section \ref{sec:analysis}, and conclude this work in Section \ref{sec:conclusion}. Details on some of the steps of our algorithms are presented in the Appendix.

%%%%%%%%%%%%%%%%%%%%%%%%%%%%%%%%%%%%%%%%%%%%%%%%%%%%%
%%%%%%%%%%%% SECTION ENDS HERE %%%%%%%%%%%%%%%%%%%%%%
%%%%%%%%%%%%%%%%%%%%%%%%%%%%%%%%%%%%%%%%%%%%%%%%%%%%%

\iffalse
The high level idea is to dynamically move frequently communicating nodes closer to minimize the {\em total time} taken by a sequence of communications $\sigma = (u_1, v_1), (u_2, v_2), \cdots,  (u_{k},v_{k})$, where each $(u_i, v_i)$ denotes a source-destination pair.
%

The individual times for each of these communications are hardly relevant.
%
A better goal in such cases is to find methods that can reduce the {\em amortized} times of the sequence of communications.
%
In meeting this goal, self-adjustment is an attractive tool.
%

%
In general, amortized costs in self-adjusting topologies are not much worse than their static counterparts, and can lead to significant performance gains when the communication pattern is skewed.
%
\fi 
\presec
\section{Related Work} \label{sec:related_work}
\postsec

Interest in self-adjusting data structures grew out of Sleator and Tarjan's
seminal work on {\em splay tree} \cite{SleatorTarjan} that emphasized the
importance of amortized cost and proposed a restructuring heuristics to attain
the amortized time bound of $O(\log n)$ per operation. 
Prior to that, Allen and Munro \cite{AllenMunro}, and Bitner \cite{Bitner} proposed 
two restructuring heuristics for search trees, but none were efficient in the amortized sense.
In \cite{Bagchi}, Bagchi et al. presented an algorithm for efficient access
in {\em biased skip lists} where non-uniform access patterns are biased
according to their weights, and the weights are known. 
Bose et al. \cite{Bose} investigated the efficiency of access in skip lists when the 
access pattern is unknown, and developed a deterministic self-adjusting skip list whose
running time matches the working set bound, thereby achieving dynamic optimality. 
Afek et al. \cite{Afek} presented a version of self-adjusting search trees, called CBTree, 
that promote a high degree of concurrency by reducing the frequency of `` tree rotation.'' 
Avin et al. \cite{splayNet} presented {\em SplayNet}, a generalization of Splay tree, where, 
unlike the splay trees, communication is allowed between any pair of nodes in the tree.
In \cite{OBST}, Avin, et al. extended the concept of splay trees to P2P overlay networks of multiple binary search trees (OBST).
However, their work addresses a routing variant of the classical splay trees that focuses on the lookup operation only. 
In \cite{AvinGrid}, Avin et al. presented a greedy policy for self-adjusting grid networks that locally minimizes an objective function by switching positions between neighboring nodes. 
$\textsf{SKIP}^+$ \cite{skip_plus} presented a self-stabilization (not self-adjusting) algorithm for 
skip graphs. 
\cite{DSG_BA} presents some of the early ideas of our work. 
\presec
\section{The Model and Definition} \label{sec:model} %\postsec
\postsec

%---> add this
%Our algorithm requires a node to have an unique identifier which we assume to be a number (can be ip address).
%
%It is required that any two unique identifiers differ by at least 1.

We begin with a quick introduction of Skip Graphs \cite{SG}.
A Skip Graph consists of nodes positioned in the ascending order of their identifiers (often called keys) in multiple levels.
Level 0 consists of a doubly linked list containing all the nodes.
The linked list at level 0 is split into 2 distinct doubly linked lists at level 1.
Similarly, each of the linked lists at level 1 is split into 2 distinct linked lists at level 2, and this continues recursively for upper levels until all nodes become singleton.
In other words, every linked list with at least 2 nodes at any level $i$ is split into 2 distinct linked lists at level $i+1$.
The number of levels in a skip graph is called the \emph{height} of the skip graph.
When a linked list splits into 2 linked list at the next upper level, we denote the split linked lists as \emph{0-sublist} (or 0-subgraph) and \emph{1-sublist} (or 1-subgraph).
Note that we refer the base (lowest) level as level 0.
We denote the height of a skip graph as $H$.

\begin{figure}[!t]
\def \subfigcapskip{0pt}
%\vspace{-0.2in}
\centering
    \subfigure[A skip graph with 6 nodes and 3 levels.]
        {\includegraphics[width=0.8\columnwidth]{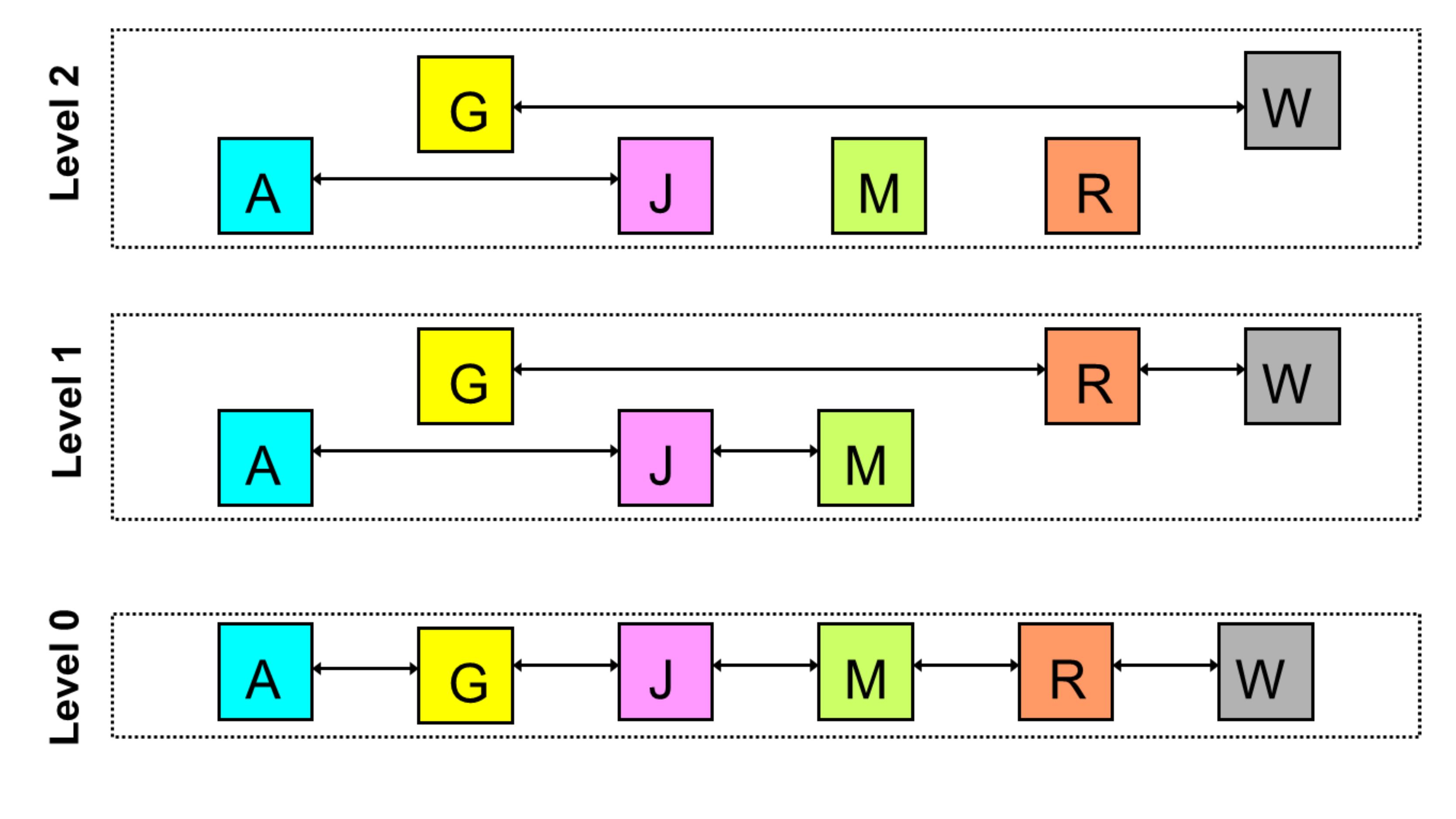}}

\centering
    %\vspace{-0.2in}
    \subfigure[The skip graph in (a) represented by a binary tree.]
        {\includegraphics[width=0.8\columnwidth]{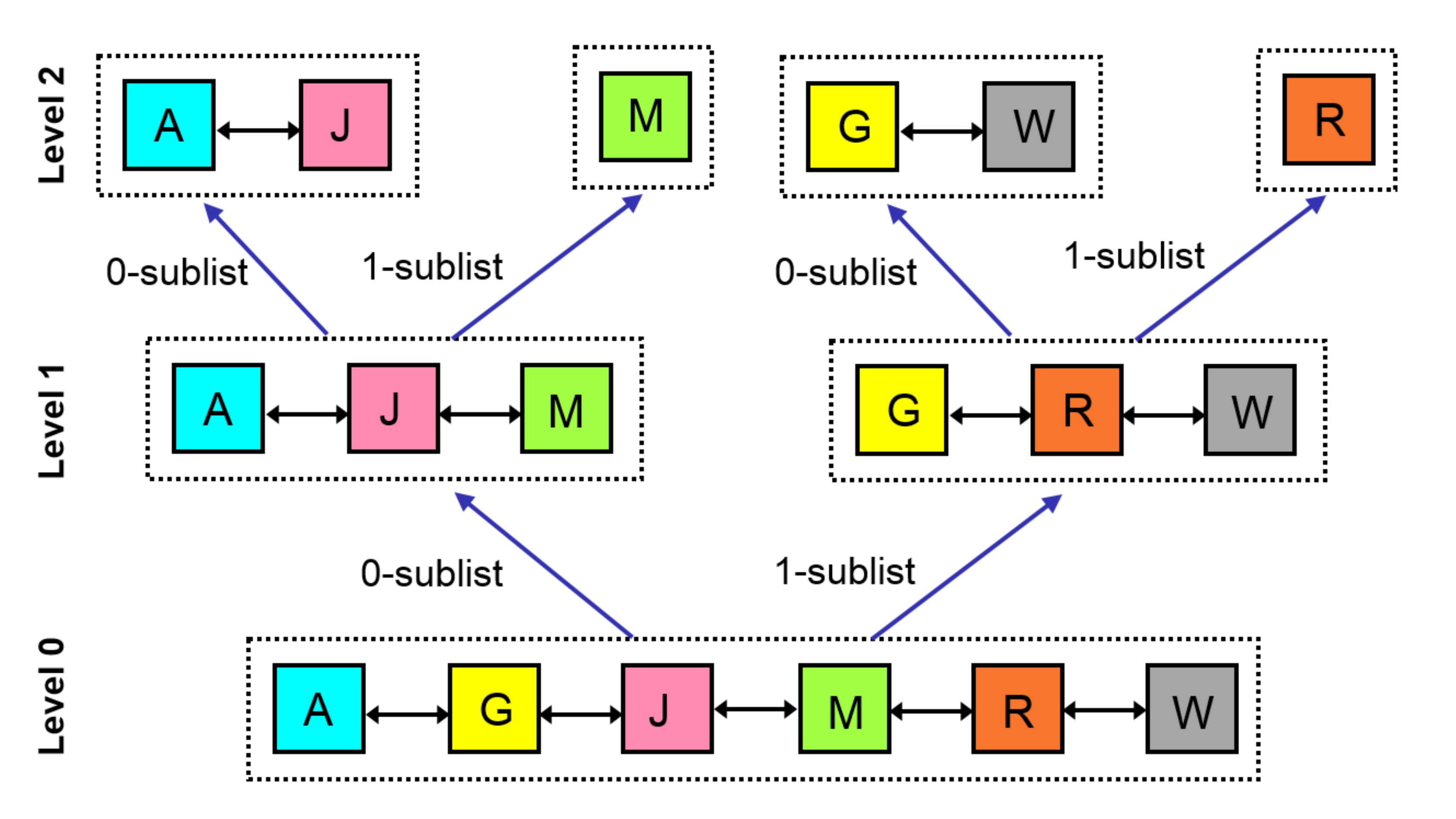}}

%\vspace{-0.1in}
    \caption{The lowest 3 levels of a skip graph shown in (a) is mapped to an equivalent tree structure in (b).}
    \label{fig:sg_to_tree}

\end{figure}

\iffalse

\begin{figure}
    \centering
    \subfigure[A skip graph with 6 nodes and 3 levels.]
    {\includegraphics{SG_sample.pdf}}
    %\centering
    %\hspace{-0.1in}
    \subfigure[The skip graph in (a) represented by a binary tree structure of linked lists.]
    {\includegraphics{tree_map.pdf}}

    \caption{The lowest 3 levels of a skip graph shown in (a) is mapped to an equivalent tree structure in (b).}
    \label{fig:sg_to_tree}
\end{figure}
\fi

For simpler representation, we map a skip graph into a binary tree of linked lists.
To this end, the linked list at level 0 is represented by the {\em root} node of the tree,
and the 0-sublist and the 1-sublist at level 1 are represented by the {\em left child} and
{\em right child} of the root, respectively.
Similarly, every linked list of the skip graph is represented by a node in the binary tree,
mimicking the parent child relationship of the skip graph in that of its equivalent binary tree.
Figure \ref{fig:sg_to_tree}(a) shows a skip graph with 3 levels, and figure \ref{fig:sg_to_tree}(b)
shows the corresponding binary tree representation.

Each node $x$ in a skip graph has a {\em membership vector} $m(x)$ of size $H-1$.
The $i^{th}$ bit of $m(x)$ represents the sublist (0 or 1) that contains node $x$ at level $i$.
For example, the membership vector of node $M$ in figure \ref{fig:sg_to_tree}(b) is 01, as $M$ belongs
to the 0-sublist at level 1, and 1-sublist at level 2.

Every node in the binary tree is the root of a subtree that represents a sub(skip)graph (or sub graph)
of the skip graph.
We refer the subgraph rooted by a 0-sublist and 1-sublist as 0-subgraph and 1-subgraph, respectively.
Since the construction is recursive, we can also designate a subgraph as $b$-subgraph, where $b$ is a
bit string containing the common prefix bits of the membership vectors for all nodes in the subgraph.
For example, the subgraph containing nodes $G$ and $W$ in figure \ref{fig:sg_to_tree}(b) is
designated by $10$-subgraph.

Let $V=\{1,...,n\}$ be a set of nodes (or peers).
Let $\mathcal{S}$ be the family of all Skip Graphs of $n$ nodes, where each topology $G(V,E) \in
\mathcal{S}$ is a skip graph with $O(\log n)$ levels.
For any skip graph $S \in \mathcal{S}$, $L_i$ denotes the set of all linked lists at level $i$ of $S$.
We define the following balance property that must hold for the family of skip graphs $\mathcal{S}$:

\noindent
\textbf{Definition (\textit{a-balance Property}).} A Skip Graph satisfies the \textit{$a$-balance} property
if there exists a positive integer $a$, such that among any $a+1$ consecutive nodes in any linked
list $l \in L_i$, at most $a$ nodes can be in a single linked list in $L_{i+1}$. The $a$-balance property
ensures that the length of the search path between any pair of nodes is at most $a \cdot \log n$.

\iffalse
We use $p(g)$ to denote the parent skip graph of a (sub)skip graph $g$, and
$g'$ to denote the sibling (sub)skip graph $g$.
\fi

\noindent
\textbf{Self-Adjusting model for Skip Graphs.} We consider a synchronous computation model, where
communications occure in \emph{rounds}. A node can send and receive at most 1 message through a link
in a round (i.e. $\mathcal{CONGEST}$ model). Our model limits the memory of each node to $O (\log n)$
bits. Given a skip graph $S \in \mathcal{S}$, and a pair of communicating nodes $(u,v) \in V \times V$,
a self-adjusting algorithm performs the followings:

\begin{enumerate}
\item Establishes communication between nodes $u$ and $v$ in $S$.
\item Transforms the skip graph $S$ to another skip graph $S' \in \mathcal{S}$, such that nodes
$u$ and $v$ move to a linked list of size two at any level in $S'$. This implies that a direct link
needs to be established between nodes $u$ and $v$.
\end{enumerate}

Note that the height of $S'$ must be $O(\log n)$ since $S' \in \mathcal{S}$. Let
$\sigma = (\sigma_1, \sigma_2, ... , \sigma_{m})$ be an unknown access sequence consisting of $m$
sequential communication requests, $\sigma_t = (u,v) \in V \times V$ denotes a routing request from
source $u$ to destination $v$. Given a skip graph $S$, we define the distance $d_S(\sigma_t)$ as the
number of intermediate nodes in the communication path from the source to destination associated with
request $\sigma_t$, where the communication path is obtained by standard skip graph routing
algorithm \cite{SG}. An overview of the standard skip graph routing is presented in
Appendix \ref{sec:appen_sg_routing}.

Given a request $\sigma_{t}$ at time $t$, let an algorithm $\mathcal{A}$ transforms the current
skip graph $S_{t}\in \mathcal{S}$ to $S_{t+1} \in \mathcal{S}$. We define the cost for network
transformation as the number of rounds needed to transform the topology. We denote this transformation
cost at time $t$ as $\rho( \mathcal{A}, S_{t}, \sigma_{t})$. Similar to a prior work \cite{splayNet},
we define the cost of serving request $\sigma_{t}$ as the distance from source to destination plus the
cost of transformation performed by $\mathcal{A}$ plus one, \textit{i.e.},
$d_{S_t}(\sigma_t)+\rho( \mathcal{A}, S_{t}, \sigma_{t})+1$.

\noindent
\textbf{Definition (\textit{Average and Amortized Cost}).} We used definitions similar to
\cite{splayNet} here. Given an initial skip graph $S_0$, the average cost for algorithm
$\mathcal{A}$ to serve a sequence of communication requests
$\sigma = (\sigma_1, \sigma_2, \cdots, \sigma_m)$ is:

\begin{equation}
Cost( \mathcal{A}, S_{t}, \sigma_{t}) = \frac{1}{m} \sum_{i=1}^{m} (d_{S_t}(\sigma_t)+\rho( \mathcal{A}, S_{t}, \sigma_{t})+1)
\end{equation}

The amortized cost of $\mathcal{A}$ is defined as the worst possible cost to serve a
communication sequence $\sigma$, \textit{i.e.}
\\ $max_{S_0,\sigma}$ Cost$(\mathcal{A}, S_0, \sigma)$.

\noindent
\textbf{Definition (\textit{Sub Skip Graph}).} A sub skip graph (often called subgraph in this paper) is
a skip graph that is a part of another skip graph. In other words, given a skip graph $S (V, E)$, a sub
skip graph $S^\prime (V^\prime, E^\prime)$ is a skip graph in $S$ such that $V^\prime \subseteq V$ and
$E^\prime$ is the set of links from $S$ induced by the nodes in $V^\prime$. We call a sub skip graph is
at level $d$ when all nodes in the sub skip graph share a common membership vector prefix of size $d$.
We call the lowest level of a sub skip graph as the \textit{base} level, and the linked list that
contains the nodes of a sub skip graph as the base linked list for that sub skip graph. Observe
that there exists a sub skip graph for all linked lists in any skip graph.

\noindent
\textbf{Definition (\textit{Working Set Number}).} We denote the \textit{working set number} for request
$\sigma_i$ as $T_i(\sigma_i)$. Let $u_i$ be the source and $v_i$ be the destination specified by
communication request $\sigma_i$. To define $T_i(\sigma_i)$, we construct a communication graph $G$ with
the nodes that communicated (either as source or destination) during the time period starting from the
last time $u_i$ and $v_i$ communicated, and ending at time $i$. We draw an edge between any two nodes
in $G$ if they communicated in this time duration. Now, we define the \textit{working set number} for
request $\sigma_i$, $T_i(\sigma_i)$, as the number of distinct nodes in $G$ that have a path from
either $u_i$ or $v_i$. In case $u_i$ and $v_i$ are communicating for the first time, $T_i(\sigma_i) = n$
by default, where $n$ is the number of nodes in the skip graph.

As an example, for the latest communication request $(u,v)$ shown in figure \ref{fig:wset}(a),
the corresponding communication graph $G$ is shown in figure \ref{fig:wset}(b). The number of distinct
nodes in $G$ that have a path from either $u$ or $v$ is 5; therefore the working set number for
the communication request is 5.

\noindent
\textbf{Definition (\textit{Working Set Property}).} For a skip graph $S$ at time $i$, the
\emph{working set property} for any node pair $(x,y) \in S$ holds iff $d_S(x,y) \leq \log T_i(x,y)$.
Note that, the $\log$ in the definition is to address the tree-like structure of the skip graph topology.

\noindent
\textbf{Definition (\textit{Working Set Bound}).} We define the \emph{working set bound} as
$WS(\sigma) = \sum_{i=1}^{m}$ log$( T_i(\sigma_i))$.

\begin{figure}[!t]
\def \subfigcapskip{0pt}
\centering

    %\vspace{-.02in}

    \subfigure[An access pattern showing a repeating communications between $u$ and $v$.]
    {\includegraphics[width=0.8\columnwidth]{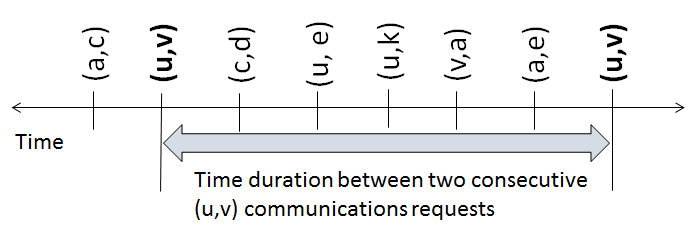}}

    %\vspace{-.05in}
\centering
    \subfigure[Communication graph $G$ for the time duration shown in (a).]
    {\includegraphics[width=0.4\textwidth]{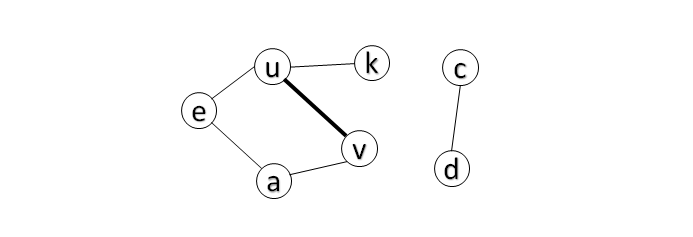}}

    %\vspace{-.1in}
    \caption{For the access pattern shown in (a), the working set number for the last communication between $u$ and $v$ is 5, as the the number of distinct nodes in $G$ that has a path from either $u$ or $v$ is 5 (e,a,k,u and v).}

    %\vspace{-.2in}
    \label{fig:wset}
\end{figure}

\begin{theorem}
\label{WSTheorem3}
For an unknown communication sequence $\sigma = \sigma_1, \sigma_2, \dots, \sigma_m$, the amortized
routing cost for any self-adjusting algorithm conforming to our model is at least $WS(\sigma)$ rounds.
\end{theorem}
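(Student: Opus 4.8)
The plan is to prove the stronger total-cost statement that $\sum_{i=1}^{m} d_{S_i}(\sigma_i) \ge \sum_{i=1}^{m} \log T_i(\sigma_i) = WS(\sigma)$ for every algorithm $\mathcal{A}$ obeying the model and every sequence $\sigma$; dividing by $m$ then yields the claimed amortized bound. I would argue request-by-request in the binary-tree-of-linked-lists representation, where each skip graph node is a leaf, a sub skip graph is a subtree, and the smallest sub skip graph containing two nodes is the subtree rooted at their lowest common ancestor. The two facts to combine are (i) a \emph{routing-depth lemma} relating $d_{S_i}(\sigma_i)$ to the size of a relevant subgraph, and (ii) a \emph{confinement lemma} relating that subgraph's size to the working set number $T_i$.

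For (i), fix $\sigma_i = (u_i,v_i)$ and suppose we have exhibited a sub skip graph $C_i$ of $S_i$ that contains one of the endpoints together with $\Theta(T_i)$ working-set nodes but not the other endpoint. Since $S_i \in \mathcal{S}$ satisfies the $a$-balance property, the subtree $C_i$ with $\Theta(T_i)$ leaves has height at least $\log|C_i|$, and by the structure of standard skip-graph routing any search path that leaves $C_i$ to reach a node outside it must cross the levels separating $C_i$ from that node, visiting at least $\log|C_i| \ge \log T_i - O(1)$ distinct intermediate nodes. Hence $d_{S_i}(\sigma_i) \ge \log T_i - O(1)$. This is precisely the step for which the $\log$ appears in the definition of the working set property, and balance is essential: it forbids the degenerate unbalanced configuration in which a large subgraph has its two relevant nodes adjacent near the top, which would otherwise make routing cheap and break the inequality.

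For (ii), I would exhibit the cluster $C_i$ and bound its size. Let $t' < i$ be the last time $u_i$ and $v_i$ communicated; because the model forced that pair into a size-two linked list at time $t'$, they then shared a subgraph of size two. I would induct over the communications in the interval $(t',i]$ in the order they occur, maintaining the invariant that the working-set nodes reachable from a given endpoint in the communication graph $G$ remain confined to a subgraph anchored at that endpoint whose size is proportional to their number. The inductive step uses that every served request likewise colocates its own pair into a size-two list, so a node enters $G$ (becomes entangled) only by being forced to colocate with a node already inside the cluster, which keeps it inside an only slightly larger subgraph; the $a$-balance property caps how much each such colocation can enlarge the cluster. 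Combining (i) and (ii) gives $d_{S_i}(\sigma_i) \ge \log T_i - O(1)$ for each repeated request, and summing over $i$ produces $WS(\sigma)$.

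The main obstacle is the confinement lemma. Forced colocation pins down a pair only at the single instant of their communication, whereas subsequent restructurings performed on behalf of other requests may migrate nodes, so the delicate point is to show that across an \emph{arbitrary} interleaving the reachable set of an endpoint cannot disperse, and moreover that the enclosing cluster stays of size $\Theta(T_i)$ rather than satisfying only a weaker bound; this is where I expect the real work to lie, most likely through an invariant carried across each restructuring operation. A secondary obstacle is first-time requests, for which $T_i = n$ by convention and the per-request inequality can genuinely fail because a single such request may be served cheaply. I would dispatch these in aggregate by appealing to the worst-case initial configuration permitted in the amortized definition, using that in any $a$-balanced skip graph all but a lower-order fraction of node pairs are $\Omega(\log n)$ apart, so the first-time requests still contribute their $\log n$ terms collectively. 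Adding this aggregate contribution to the summed per-request bounds yields $\sum_i d_{S_i}(\sigma_i) \ge WS(\sigma)$, which is the stated amortized lower bound.
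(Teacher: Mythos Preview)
Your approach differs fundamentally from the paper's, and the difference exposes a real gap.

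The paper's proof is not a per-sequence lower bound at all; it is a short adversarial exchange argument. The key step is: if in $S_i$ some pair $(u_i,v_i)$ has $d_{S_i}(u_i,v_i) < \log T_i(u_i,v_i)$, then by the tree structure of the skip graph there must exist another node $w_i$ with $d_{S_i}(u_i,w_i) > \log T_i(u_i,w_i)$; since the sequence is \emph{unknown}, the adversary simply issues $(u_i,w_i)$ instead of $(u_i,v_i)$. The word ``unknown'' in the theorem statement is doing essential work, and the paper's example immediately after the proof (the $2k$-node linked list with the star communication pattern around $A$) is what actually justifies the existence of such a $w_i$.

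Your plan, by contrast, aims to show $\sum_i d_{S_i}(\sigma_i) \ge WS(\sigma)$ for \emph{every} algorithm and \emph{every} fixed sequence. This is strictly stronger, and your confinement lemma does not hold at that generality. The model only forces the \emph{current} communicating pair into a size-two list; all other nodes may be placed arbitrarily in the new skip graph $S'\in\mathcal{S}$. Nothing prevents an algorithm, after serving many intervening requests, from restructuring so that $u_i$ and $v_i$ are again adjacent in $S_i$---there is no mechanism that traps the working-set nodes near either endpoint across arbitrary restructurings. Concretely: take $(u,v)$, then a long chain of requests internal to some set $A$ disjoint from $\{u,v\}$, then a single $(u,a)$ with $a\in A$, then $(u,v)$ again. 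The working-set number for the final request is $\Theta(|A|)$, yet the algorithm that, after serving $(u,a)$, places $v$ in the small subgraph containing $\{u,a\}$ achieves $d(u,v)=O(1)$. Your invariant (``working-set nodes reachable from a given endpoint remain confined to a subgraph of proportional size'') fails here, and there is also an internal inconsistency in step~(i): $v_i$ is itself reachable from $u_i$ in $G$, so a cluster $C_i$ that contains the nodes reachable from $u_i$ automatically contains $v_i$, contradicting your requirement that $C_i$ omit one endpoint.

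The paper's argument avoids all of this because it never claims any fixed sequence is expensive; it only claims the adversary can always redirect the next request to a pair that currently violates the working-set property, so that over the whole (adversarially chosen) sequence the bound is met. You would need to abandon the universal-sequence formulation and instead prove the pigeonhole-type statement that no skip graph in $\mathcal{S}$ can satisfy the working-set property for all pairs simultaneously.
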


%We present a simplified amortized analysis to prove this theorem. The following proof
%can also be presented using the well-known potential method for amortized analysis.
%\vspace{-0.1in}

\begin{proof}
Let us assume that the theorem does not hold. Then there must be at least one request
$\sigma_i = (u_i, v_i)$ such that $d_{S_i} (u_i, v_i) < T_i(u_i, v_i)$. However, due to the construction
of the skip graph, this results in existance of a node $w_i$ such that $d_{S_i} (u_i, w_i) > T_i(u_i, w_i)$.
An example to illustrate this idea is presented right after this proof.

Since the communication sequence is unknown, it is possible that $(u_i, w_i)$ is chosen as
request $\sigma_i$ instead of $(u_i, v_i)$. Since this argument is applicable for any $\sigma_i \in \sigma$,
the theorem holds.
\end{proof}

\noindent
\textbf{Example of Working Set Bound.}
Consider the communication graph in Figure \ref{fig:ws_proof}. Let $l$ be a linked list of size
$2k$ in a skip graph. Let nodes $U, V, A, A_1, A_2, \cdots, A_{n-2}$ belong to the linked list $l$.
Let the linked list $l$ split into 2 subgraphs (i.e. sublists) each with size $k$ at the next (upper) level.
Suppose node $A$ moves to the 0-subgraph. Now, we need to choose other $k-1$ nodes to accompany node $A$
in the 0-subgraph.

Let we move nodes $U$ and $V$ to the 0-subgraph. Then there exists a node $A_i, 1 \leq i \leq k-2$ that
moves to the 1-subgraph. Clearly this violates the working set property for the pair $(A, A_i)$. However,
if we move nodes $U$ and $V$ to the 1-subgraph, we violate the working set property for the pair $(A,U)$.
Thus, $U$ must move to the 0-subgraph and $V$ must move to the 1-subgraph. Note that, at time $t^\prime + k$,
the working set number for pair $(U,V)$, $T_{t^\prime + k} (U,V) = k + 1$; and the routing distance for pair
$(U,V)$ is $\left \lceil{T_{t^\prime + k} (U,V)}\right \rceil =
\left \lceil{\log(k+1)}\right \rceil = \log _2 (2k)$.

\begin{figure}[!t]
\centering
\includegraphics[width=0.6\columnwidth]{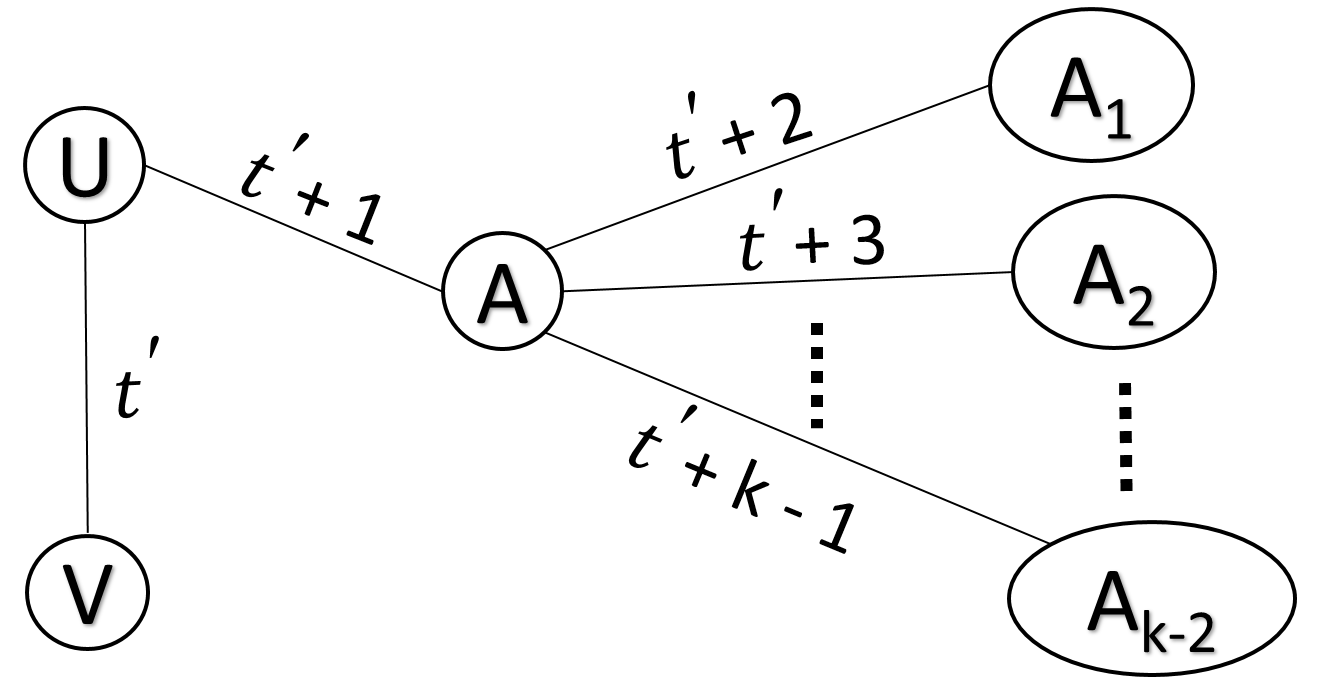}

\caption{A communication graph. Each edge is labeled with the timestamp of the most recent communication
between the end nodes.}
\label{fig:ws_proof}

\end{figure}

\presec
\vspace{-0.0in}
\section{Dynamic Skip Graphs (DSG)} \label{sec:dsg}
\postsec

\subsection{Overview} \label{sec:overview}
\postsec

Upon a communication request, our algorithm \textsf{DSG} first establishes the communication using the standard skip graph routing, then performs atomic topological transformation conforming to the self-adjusting  model. The key idea behind \textsf{DSG} is that frequently communicating nodes form groups at different levels and a node's attachment to a group is determined by a timestamp. Each node has a \emph{group-id} and a \emph{timestamp} associated with each level. A node is a member of a \emph{group} at each level, and the group-id is an identifier that represents a group. All nodes belong to the same group at a level hold the common group-id. The timestamp associated with a level is used by \textsf{DSG} to identify how attached a node is with its group at that level.

When two nodes from two different groups communicate, \textsf{DSG} merges the communicating groups to a single group. However, when a group grows too big to be accommodated in a single linked list at the corresponding level (due to the structural constraint of the skip graph), \textsf{DSG} splits the group into two smaller groups. There are three challenges here. First, since the goal of \textsf{DSG} is to ensure that the working set property always holds for any node pair in any group, distance between any two nodes from any non-communicating group should not increase. In other worlds, routing distances among nodes of any non-communicating groups should not be affected due to a transformation. Second, the working set property must hold for any node pair in the merged and split groups after a transformation. Third, a transformation requires a partial reconstruction of the skip graph structure. According to our self-adjusting model, the height of the skip graph must remain $O(\log n)$ after any reconstruction.

Transformation starts from the highest level at which communicating nodes share a common linked list. For example, the highest level with a common linked list for nodes $A$ and $M$ in the skip graph in Figure \ref{fig:sg_to_tree} is level 1 and the common linked list is the linked list that contains only nodes $A$, $J$ and $M$. Starting from the highest level with common linked list, transformation continues recursively and parallelly in the upper levels until all the involved nodes become singleton (i.e. move to a linked list of size 1). For each of the newly created linked lists of size $>1$, transformation takes place as follows:

\begin{enumerate}

\item[--] Each node of the linked list computes a priority using certain priority rules. Each priority is a function of node's group-id and timestamp for the corresponding level. The priorities are computed in a way such that all groups have a distinct range of priorities. The communicating nodes have the highest priority, each node of the merged group has a positive priority, and all other nodes have a negative priority.

\item[--] All nodes of the linked list compute an approximate median priority using the algorithm \textsf{AMF}. In general, at the next upper level after transformation, any node with a priority higher or equal to the approximate median priority moves to the 0-subgraph, and any node with a priority lower than the approximate median priority moves to the 1-subgraph. \textsf{DSG} uses priorities to ensure that nodes from the same group remain together after a transformation. However, a transformation technique based on comparing priority with approximate median priority may split a non-communicating group. Such cases are handled carefully by \text{DSG}. Note that, the approximate median priority is used to ensure that the sizes of the 0-subgraph and 1-subgraph are roughly the same after a transformation, keeping the height of the skip graph always $O(\log n)$.

\end{enumerate}

Since communicating nodes always move to the 0-subgraph, after transformation in all levels, communicating nodes are guaranteed to move to a linked list of size 2. Each node involved in the transformation reassigns its group-ids and timestamps such that \textsf{DSG} can work consistently for future communication requests.

\presec
\subsection{Setup and notations} \label{sec:setup}
\postsec

Let $H_t$ be the height of the skip graph at time $t$. \textsf{DSG} requires every node to hold $H_t$ bits to store its membership vector. In addition, each node stores a {\em timestamp} and a {\em group-id} for each of the levels. For node $i$ and level $j$, we use the notations $V^i_j$, $T^i_j$ and $G^i_j$ to denote the stored membership vector bit, timestamp, and group-id respectively. Initially, all timestamps are set to zero and all group-ids are set to the corresponding node's identifier.

\begin{figure}[!t]

    \def \subfigcapskip{0pt}

    \subfigure[A communication graph $G$]
        {\includegraphics[width=0.2\textwidth]{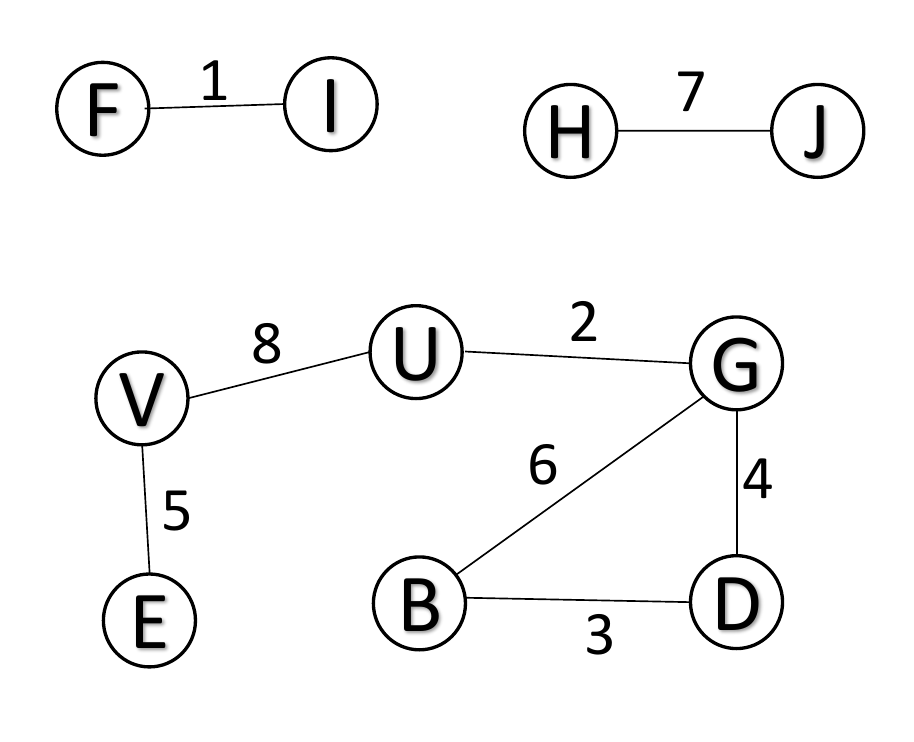}}

    \vspace{0.2in}

    \subfigure[Skip graph at time 8, $S_8$.]
        {\includegraphics[width=0.5\textwidth]{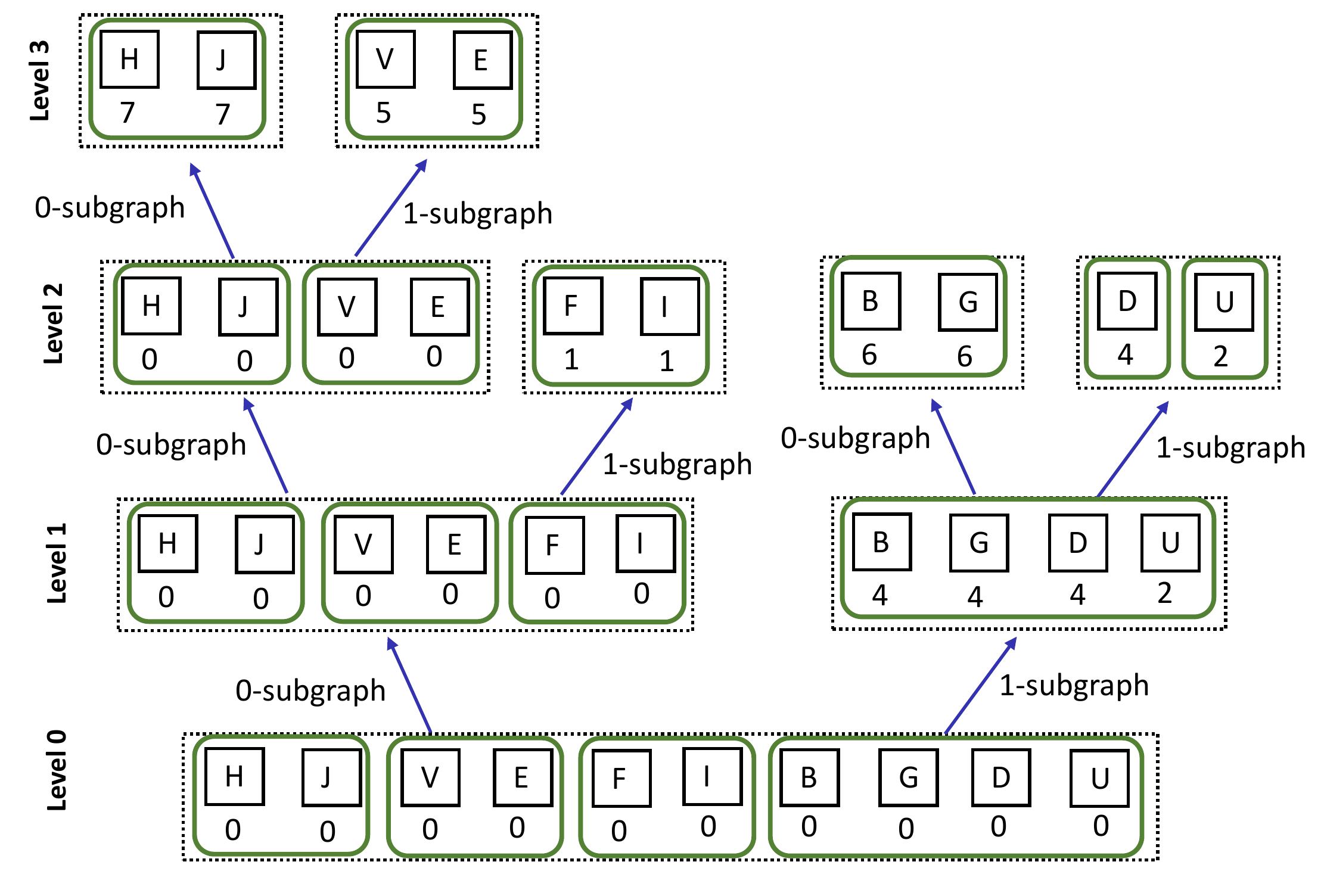}}

    \vspace{0.2in}

    \subfigure[Skip graph at time 9, $S_9$.]
        {\includegraphics[width=0.5\textwidth]{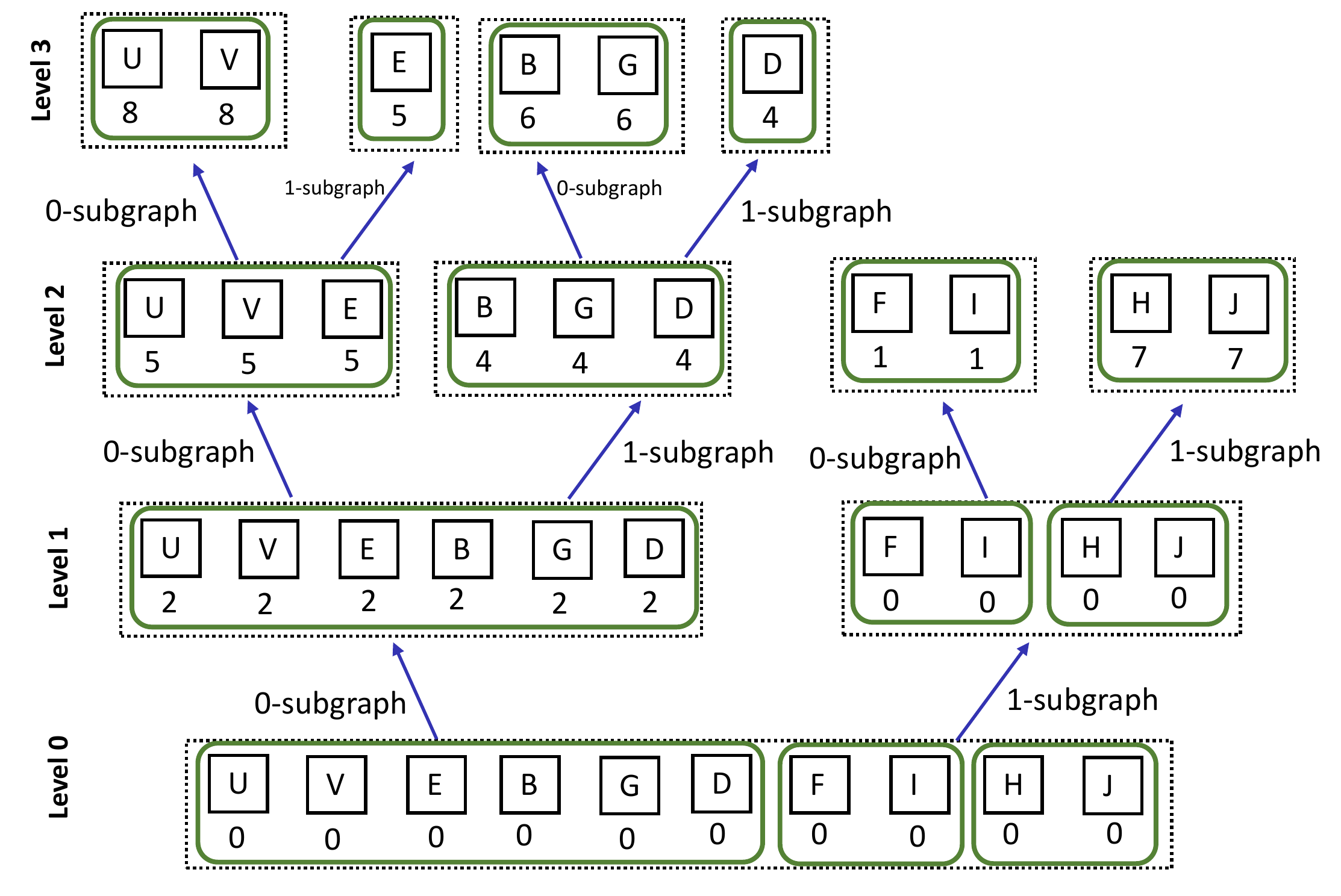}}

    \vspace{0.2in}

    \caption{For the communication graph $G$ shown in (a), possible skip graph representations at time 8 and 9 ($S_8$ and $S_9$), obtained by $\textsf{DSG}$, are shown in (b) and (c), respectively. The rounded rectangles show the groups of nodes at different levels, and the number below each node is the timestamp for the node at the corresponding level. For example, in $S_9$ (figure c), the group of node $B$ at level 2 has 3 nodes ($B$, $G$, and $D$), and the timestamp of node $B$ for level 2 (i.e. $T^B_2)$ is 4.}
    \label{fig:example}
\end{figure}

Figure \ref{fig:example}(a) shows a communication graph $G$, where each edge is labeled with the time associated with the most recent communication. Observe that nodes $U$ and $V$ communicate at time 8 as shown in $G$. For the communication graph $G$, Figure \ref{fig:example}(b) shows a binary tree representation of a valid skip graph ($S_8 \in \mathcal{S}$) at time 8 obtained by \textsf{DSG}. Figure \ref{fig:example}(c) shows a valid skip graph ($S_9 \in \mathcal{S}$) representation, where $S_9$ is transformed from $S_8$ using  \textsf{DSG}, as a result of the communication between nodes $U$ and $V$ at time 8. The numbers below each node in Figure \ref{fig:example}(b) and (c) give the timastamp of the node at the corresponding level. We use this transformation from $S_8$ to $S_9$ as an example for the description of our algorithm for the remaining paper.

For a communication request from node $u$ to node $v$, let $\alpha$ be the highest common level of the current skip graph with a linked list containing both nodes $u$ and $v$. Let $l_\alpha$ denote that linked list, then $u,v \in l_\alpha$. Let $t$ be the time when the request is originated, and $S_t$ be the skip graph at time $t$. In Figure \ref{fig:example}(b), for nodes $U$ and $V$, $\alpha = 0$, $l_\alpha$ is the linked list represented by the root of the binary tree, and $t = 8$.
%Note that every linked list in the skip graph is mapped to an unique node in the equivalent binary search tree as shown in figure \ref{fig:sg_to_tree}.

We explain different parts of our algorithm in detail in following subsections.

\presec
\subsection{Transformation from $S_t$ to $S_{t+1}$} \label{sec:transformation}
\postsec

Upon a routing from node $u$ to node $v$, node $v$ records the highest common level number $\alpha$ and shares $\alpha$ with node $u$. Then both nodes $u$ and $v$ broadcast a transformation notification to all nodes in $l_\alpha$. The notification message includes all $H_t$ timestamps, group-ids and membership vectors of nodes $u$ and $v$. All nodes  $x \in l_\alpha$ compute a priority $P(x)$ by using following priority-rules:

%\vspace{-0.2in}
\begin{enumerate}

\item[P1:] (Rule for communicating nodes). Nodes $u$ and $v$ set $\infty$ as their priority. In other words, set $P(u) = P(v) = \infty$.

\item[P2:] (Rule for nodes in the same group of either communicating nodes at level $\alpha$). All nodes $x \in l_\alpha, x \neq u, x \neq v, G^x_\alpha = G^u_\alpha$ assign their priority $P(x) = \min (T^x_c, T^u_c)$ where $c$ is the highest level in $S_t$ such that $G^x_c = G^u_c$. Similarly, all nodes $x \in l_\alpha, x \neq u, x \neq v, G^x_\alpha = G^v_\alpha$ assign $P(x) = \min (T^x_c, T^v_c)$ where $c$ is the highest level in $S_t$ such that $G^x_c = G^v_c$.

\item[P3:] (Rule for other nodes). Each node $x \in l_\alpha, x \neq u, x \neq v, G^x_\alpha \neq G^u_\alpha, G^x_\alpha \neq G^v_\alpha$ (i.e. neither in $u$'s nor in $v$'s group at level $\alpha$) set $P(x) = - (G^x_\alpha \cdot t)   + T^x_{\alpha + 1} $.

\end{enumerate}

We require that group identifiers are non-negative integers (possibly an ip address of a node). Observe that, all nodes of the communicating group at level $\alpha$ have a positive priority as a timestamp is always positive, and rest of the nodes have a negative priority as \textsf{DSG} ensures that $t > T^x_{\alpha + 1}$. Also, according to P3, priorities assigned to nodes from a non-communicating group range between $ - (G^x_\alpha \cdot t)$ and $-(G^x_\alpha+1) \cdot t$, where $G^x_\alpha$ is the group-id.

In our example in Figure \ref{fig:example}, as nodes U and V communicate at time 8, the $\alpha$ is 0 ($S_8$ in Figure \ref{fig:example}(b)). Let us assume that the nodes' numerical identifiers are determined by their positions in the English alphabet, e.g. identifier for node A is 1, identifier for node B is 2, and so on. According to the priority rule P1, $P(U) = P(V) = \infty$; and according to the priority rule P2, $P(D) = 2$, $P(G) = 2$, $P(B) = 2$ and $P(E) = 5$. Let us assume that the group-ids $G^H_0 = G^J_0 = 10$ (as H is the tenth letter in alphabet) and $G^F_0 = G^I_0 = 6$. Hence, according to the priority rule P3, $P(H) = P(J) =  -(10 \times 7) + 2 = -68$, and $P(F) = P(I) = -(6 \times 7) + 2 = -40$.

At this point, node $u$'s group at level $\alpha$ merges with node $v$'s group at the same level by updating their group-ids. To this end, all nodes $x \in l_\alpha$ with $G^x_\alpha = G^u_\alpha$ or $G^x_\alpha = G^v_\alpha$  set $G^x_\alpha = u$. Note that, by $u$, we mean the identifier of node $u$.

Transformation begins at level $\alpha + 1$ and recursively continues at upper levels. Only the nodes in linked list $l_\alpha$ take part in the transformation. For the remaining section, we write $d$ to refer the current level of transformation and $l_{d-1}$ to refer the linked list that is involved in the transformation. Initially $d=\alpha + 1$ and $l_{d-1}=l_\alpha$.

All nodes in $l_{d-1}$ find an approximate median priority to decide whether to move to the 0-subgraph or to the 1-subgraph (i.e. determine new membership vector bit $V^x_{d}$ in the new skip graph $S_{t+1}$). We propose a distributed approximate median finding algorithm (\textsf{AMF}) for skip graphs to find the approximate median in expected $O(\log_a n)$ rounds, where $a$ is a constant. The algorithm \textsf{AMF} is described in section \ref{sec:amf}. For now let us consider \textsf{AMF} as a black box that finds an approximate median priority and broadcasts it to all nodes in linked list $l_{d-1}$.
%The reason of computing the approximate median is to ensure that the number of nodes in both 0-subgraph and 1-subgraph at level $d$ are roughly the same so that $S_{t+1}$ becomes balanced.

However, to utilize \textsf{AMF}, in some cases we require to identify the nodes that moved to the 0-subgraph by receiving a positive approximate median priority. To this end, we introduce a set of boolean variables referred to as \emph{is-dominating-group}s, held by each node of the skip graph. Each node holds $H_t$ is-dominating-group variables, one for each level. Let $D^x_d$ denote the is-dominating-group of node $x$ for level $d$. The goal is to ensure that any node $x$ with $D^x_d = True$ moved to the 0-subgraph at level $d+1$ in past when it received a positive approximate median priority at level $d$ for the last time.

Let the approximate median priority be $M$. One of the following cases must follow:

\textbf{Case 1 ($M$ is positive).} Each node $x$ with $P(x) \geq M$ moves to the 0-subgraph at level $d$ and sets is-dominating-group $D^x_d = True$. Each node $x$ with $P(x) < M$ moves to the 1-subgraph at level $d$ and sets is-dominating-group $D^x_d = False$. According to the priority rules P1 and P2, this case splits the merged group of nodes $u$ and $v$.

\textbf{Case 2 ($M$ is Negative).} When $M$ is negative, there may exist a group $g_s$ such that all nodes $x \in g_s$ finds the following condition true.

%\vspace{-0.2in}
\begin{equation} \label{eq:gr_split}
- G^x_{d-1} \cdot t  \geq M \geq - (G^x_{d-1}+1) \cdot t
\end{equation}
%\vspace{-0.2in}

If that happens, splitting group by comparing $P(x)$ and $M$ (as we do for case 1) may split the group $g_s$ at level $d$ as its nodes may move to different subgraphs. Given that $M$ is negative, the priority rule P3 confirms that group $g_s$ is a non-communicating group at level $d$ (i.e. contains neither $u$ nor $v$). Clearly the working set property will be violated if the group $g_s$ is split at level $d$ since it will increase the distance between some node pairs in group $g_s$. To fix this issue, nodes in $l_d$ perform a distributed count to compute $|l_d|$ and $|g_s|$. Then nodes decide which subgraph to move to as follows:

\begin{itemize}

\item[--] If $|g_s| > \frac{2}{3} |l_d|$ ($|g_s|$ is too big, thus we need to split $g_s$)
    \begin{itemize}
    \item any node $x \in g_s$ moves to the 1-subgraph if $D^x_d = True$; $x$ moves to the 0-subgraph otherwise.
    \item any node $x \in l_{d-1} \text{ and } x \not \in g_s$ moves to the 0-subgraph.
    \end{itemize}

\item[--] If $|g_s| < \frac{1}{3} |l_d|$ ($|g_s|$ is sufficiently small, thus we move all nodes of $g_s$ either to the 0-subgraph or to the 1-subgraph)
%\vspace{-0.01in}
    \begin{itemize}
    \item any node $x$ such that $x \in l_{d-1} \text{ and } x \not \in g_s$ moves to the 0-subgraph if $P(x) \geq M$; $x$ moves to the 1-subgraph otherwise.
    \item Let $L_{low} = \{x \in l_d | P(x) < M\}$ and $L_{high} = \{x \in l_d | P(x) \geq M\}$. Clearly, $|l_d| = L_{low} + L_{high}$. Any node $x \in g_s$ moves to the 0-subgraph if $L_{high} < L_{low}$; $x$ moves to the 1-subgraph otherwise.
    \end{itemize}

\item[--] If $\frac{1}{3} |l_d| \leq |g_s| \leq \frac{2}{3} |l_d|$ (Move all nodes of $g_s$ to the 1-subgraph and rest of the nodes to the 0-subgraph)
    \begin{itemize}
    \item any node $x$ such that $x \in l_{d-1} \text{ and } x \not \in g_s$ moves to the 0-subgraph.
    \item any node $x \in g_s$ moves to the 1-subgraph.
    \end{itemize}

\end{itemize}

Note that $|g_s|$, $L_{low}$ and $L_{high}$ can be computed in $O(\log n)$ rounds by computing distributed sum using a balanced skip list. Algorithm \textsf{AMF} constructs a balanced skip list to compute the median priority. The balanced skip list created by \textsf{AMF} can be reused to compute $|g_s|$, $L_{low}$ and $L_{high}$. To avoid distraction, we present the distributed sum algorithm in Appendix \ref{sec:appen_dist_count}.

As nodes decide which subgraph to move to, two new linked lists are formed at level $d$. To find the left and right neighbors at level $d$, nodes linearly search for neighbors at level $d-1$. Because of the a-balance property, it is guaranteed that a node finds both of its left and right neighbors in at most $a$ rounds (end-nodes have just one neighbor). All nodes $x$ in the new linked lists that do not contain communicating nodes $u$ and $v$ recompute $P(x)$ (for upcoming transformation at level $d$) using the priority rule P4. Computation of $P(x)$ with P4 is similar to P3 except that $\alpha$ is replaced by $d$.

\begin{enumerate}

\item[P4:] (Rule for nodes moved to a linked list that does not contain nodes $u$ and $v$). Each node $x$ moves to a linked list $l_d$ such that $u,v \neq l_{d-1}$ sets $P(x) = - (G^x_d \cdot t)   + T^x_{d+1} $.

\end{enumerate}

Since the communicating nodes $u$ and $v$ always move to the 0-subgraph, a linked list $l_d$ contains nodes $u$ and $v$ only if the following condition is true for all nodes $x \in l_d$.
%\vspace{-0.05in}
\begin{equation} \label{eq:uv_check}
V^x_{\alpha+1} = V^x_{\alpha + 2} = ... = V^x_d = 0
\end{equation}
%\vspace{-0.1in}

The transformation procedure described above is performed recursively and parallelly by each new linked list until all nodes become members of a singleton list. Clearly this ensures that node $u$ and $v$ get connected directly, and all nodes $x \in l_\alpha$ finds their new and complete membership vectors.

Getting back to our example in Figure \ref{fig:example}, let the approximate median priority $M$ be 2 for $d = 1$. Then nodes U, V, E, B, G and D move to the 0-subgraph and nodes F, I, H and J move to the 1-subgraph at level 1. Now nodes in the 1-subgraph recomputes their priority by using P4. Then both linked lists at level 1 performs the transformation recursively and paralelly, and this continues at upper levels.

In the following two subsections, we discuss how nodes involved with a communication reassigns their group-ids and timestamps.
%The goal for these reassignments are to ensure that the Lemma \ref{TimestampLemma} holds.

\presec
\subsection{Assignment of new group-ids} \label{sec:newGID}
\postsec

For each level $d = \alpha, \alpha+1, \cdots , H_{t+1}$, each node $x \in l_{d-1}$ checks if nodes $u$ and $v$ belong to their own linked list at level $d$ by checking the condition in equation \ref{eq:uv_check}. If a node finds the condition in equation  \ref{eq:uv_check} to be
true, the node sets its group-id $G^x_d$ to the identifier of node $u$. If the condition in equation \ref{eq:uv_check} is found to be false by any node, it checks if its group at level $d$ is getting split due to the transformation. A group without communicating nodes $u$ and $v$ can split at level $d$ only if the group is $g_s$ and $|g_s| > \frac{2}{3} |l_d|$.
%Since the priorities for these nodes are set by using the priority rule P3, the key idea here is that group-ids are unique node identifiers and two group-ids differ by at least 1.
In case of such a split, the identifier of the left-most node in the split group is used as the new group-id. Observe that it is easy for the left-most node to detect itself since it does not find a left neighbor. The balanced skip list structure created by algorithm \textsf{AMF} is then reused to propagate the new group-id to all the nodes of the group. The left-most node first sends its identifier to the head node of the balanced skip list, and then the identifier is broadcasted to all nodes at the base level of the skip list. The balanced skip list is destroyed once the new group-id is sent to all members of the group.

As nodes $u$ and $v$ move to the same group, for correctness, it is necessary that all nodes in node $u$'s and node $v$'s group at any level $d < \alpha$ in $S_t$ must have the same group-id at level $d$ in $S_{t+1}$. All such nodes update their group-ids for levels below $\alpha$ by using the procedure presented in Appendix \ref{sec:appen_group_id}, only if $G^u_{\alpha-1} \neq G^v_{\alpha-1}$.

\presec
\subsection{Assignment of new timestamps} \label{sec:newTS}
\postsec

Each node $x \in l_\alpha$ updates its timestamps by using the following timestamp-rules (executes in the order given below):
%\vspace{-0.01in}
\begin{enumerate}

\item[T1:]  Let $d^\prime$ be the level at which nodes $u$ and $v$ form a linked list of size 2. Nodes $u$ and $v$ set $T^u_{d^\prime} = T^u_{d^\prime + 1} = t$ and $T^v_{d^\prime} = T^v_{d^\prime + 1} = t$. Note that both nodes $u$ and $v$ become singleton in level $d^\prime + 1$. Nodes $u$ and $v$ also set $T^u_i = T^v_i = max(T^u_i, T^v_i)$ for $i = d^\prime - 1, d^\prime - 2,\cdots,B_u$. For an example, check the timestamps of nodes $U$ and $V$ in $S_9$ (Figure \ref{fig:example}(c)).

\item[T2:] For a node $x$, let $c^\prime$ be the size of the longest common postfix between membership vectors $m(u^\prime)$ and $m(x)$ in $S_t$, where $u^\prime$ is the nearest communicating node (either $u$ or $v$) to $x$ in $S_t$. For example, in $S_8$ (Figure \ref{fig:example}(b)), $c^\prime$s for nodes $E$ and $G$ are 2 and 1, respectively. Let $M^x_d$ be the approximate median priority received by node $x$ at level $d$. For each level $d > \alpha$, each node $x$ with $G^x_d = G^u_d$ in $S_{t+1}$ (i.e. after group reassignment) sets $T^x_{d+1} = T^x_c$, where $c$ is the lowest level in $S_t$ with $\alpha \leq c < c^\prime$ such that $T^x_c > M^x_d$. If no such $T^x_c$ exists, node $x$ sets $T^x_{d+1} = M^x_d$. For an example, check the timestamps of node $E$ in $S_{t+1}$ (Figure \ref{fig:example}(c)) at levels 1 and 2, assuming $M^E_0 = 2$ and $M^E_1 = 5$.

\item[T3:] Let $x$ be a node such that $x \neq u, x \neq v, x \in l_\alpha$ with $G^x_\alpha = G^u_\alpha$ in $S_t$ (before transformation). Let $c^\prime$ be the size of the longest common postfix between $m(u)$ and $m(x)$ in $S_t$ and $c^{\prime \prime}$ be the longest common postfix between $m(u)$ and $m(x)$ in $S_{t+1}$. If $c^\prime - 1 > c^{\prime \prime} + 1$, each node $x$ sets $T^x_i = T^x_{c^\prime}$ for all $i =  c^\prime - 1,  c^\prime - 2, \cdots,  c^{\prime \prime} + 1$. Similarly, each $x$, $x \neq u, x \neq v, x \in l_\alpha$ with $G^x_\alpha = G^v_\alpha$ at time $t$ (before transformation), updates their timestamps w.r.t. node $v$. For example, for node $E$ in Figure \ref{fig:example}, $c^\prime = 3$ and $c^{\prime \prime} = 2$. Hence T3 does not apply for node $E$.

\item[T4:]  Each node $x$ that initialized or received $G_{lower}$ finds the lowest level $d$ such that $T^x_{d+1} = 0$. If such a $d$ exists and if $d > B_x$, node $x$ sets $T^x_i = T^x_{d+1}$ for $i = d, d-1, \cdots, B_y$.

\item[T5:] Let $x$ be a node, $x \in l_\alpha$ and $x$ belongs to a group $g$ at level $d$ in $S_t$, $d \geq \alpha$, such that $g$ splits into two subgroups at level $d$ in $S_{t+1}$. Each node $x$ sets $T^x_{d-1} = T^x_d$ only if $T^x_{d-1} = 0$.

\item[T6:] A \emph{group-base} of a node is the highest level at which the node belongs to its biggest group. For example, in the skip graph $S_8$ in Figure \ref{fig:example}(b), the group-base for node B is 1, as 1 is the highest level at which node $B$ is a member of its biggest group (B,G,D,U). Appendix \ref{sec:appen_group_id} presents details about how nodes maintain their group-base in a distributed manner. Let $B_x$ denote the group-base of a node $x$. Each node $x \in l_\alpha$ sets $T^x_d = 0$ for all $d < B_x$. For an example, check the timestamps of nodes $F$ and $I$ in $S_{t+1}$ (Figure \ref{fig:example}(c)) at level 1 and 0. Note that, $B_F = B_I = 2$ in $S_{t+1}$.

\end{enumerate}

After reassigning the timestamps, all nodes $x \in l_\alpha$ independently set themselves free for the next communication or transformation. A summary of the algorithm \textsf{DSG} is presented in Algorithm 1.

%%%%%%%%%%%%%%%%%%%%%%%%%%%%%%%%%%%%%%%
% algorithm DSG
%%%%%%%%%%%%%%%%%%%%%%%%%%%%%%%%%%

\begin{algorithm}[h] \label{alg:DSG}
\DontPrintSemicolon

%\KwData{$G=(X,U)$ such that $G^{tc}$ is an order.}
%\KwResult{$G’=(X,V)$ with $V\subseteq U$ such that $G’^{tc}$ is an interval order.}
\caption{Dynamic Skip Graph \textsf{DSG}\label{DSG} (summary)}

Upon request $(u,v)$ in $S_t$, establish the communication by using standard skip graph routing and find $\alpha$. Broadcast the membership vector, timestamps $T^u_d$, $T^v_d$, group-ids $G^u_d$, $G^v_d$, and group-bases $B_u$, $B_v$, where $d = \alpha, \alpha+1, ..., H_t$, to all nodes in $l_\alpha$.
\;
Each node $x \in l_\alpha$ computes their priority $P(x)$ using the using priority-rules P1, P2 and P3.
\;
Node $u$'s group at level $\alpha$ merges with node $v$'s group at level $\alpha$ by setting $G^x_\alpha = u$ for each group member $x$.
\;
Let $d = \alpha + 1$ and linked list $l_{d-1} = l_\alpha$. Compute the approximate median priority $M$ by using the algorithm \textsf{AMF}.
\;
Compute $|L_{low}|$, $|L_{high}|$, and $|g_s|$ using the balanced skip list formed by \textsf{AMF} if the condition in Equation \ref{eq:gr_split} is true.
\;
Determine the membership vector bit $V^x_d$ by using $P(x)$, $M$, $|L_{low}|$, $|L_{high}|$, and $|g_s|$ and update is-dominating-group $D^x_d$.
\;
Reuse the balanced skip list formed by \textsf{AMF} to check if a-balance property is being violated by the rearrangement. If yes, put a dummy node to break the chain violating the a-balance property.
\;
%\vspace{-0.15in}
If a group at level $d$ is split into two subgraphs (because of step 6), find new (level $d$) group-id for the split group that moves to the 1-subgraph, and broadcast the new group-id by using the balanced skip list formed by \textsf{AMF}. Each $x$ in that group updates its $G^x_d$ with the new group-id. However, if the linked list formed by nodes that moved to 0-subgraph contains nodes $u$ and $v$, nodes $x$ set  $G^x_d = u$. Each node $x$ also updates their priorities ($P(x)$) using priority-rule P4 it its linked list does not contain nodes $u$ and $v$. Destroy the balanced skip list formed by \textsf{AMF}.
\;
Repeat steps 2 to 8 recursively and parallelly for all newly formed linked lists ($l_{d-1}$) that contains at least 2 nodes.
\;
Update group-ids and group-bases for involved nodes.
\;
Update timestamps using timestamps-rules T1-T6.
\;
Independently set nodes ($x \in l_\alpha$) free for next communication.
\;

\end{algorithm}

%%%%%%%%%%%%%%%%%%%%%%%%%%%%%%%%%%%%%%%%%%%%%%%%%%%%%%%%
% algorithms ends
%%%%%%%%%%%%%%%%%%%%%%%%%%%%%%%%%%%%%%%%%%%%%%%%%%%%

%\vspace{-0.05in}
\presec
\subsection{Maintaining the a-balance property} \label{sec:a-bal}
\postsec
When a node $x$ moves to a new subgraph at level $d$, node $x$ checks if the a-balance property is being violated at level $d$ by the rearrangement of nodes. The balanced skip list created by \textsf{AMF} is reused to check if any consecutive $a$ nodes at level $d-1$ have moved to the same subgraph at level $d$. All nodes in the skip list at level $d$ check the new membership vector bit of $a$ nearby nodes at level $d-1$ in both sides and share this information with both neighbors of the skip list at level $d$ to detect chains. If a chain of size $a$ or longer is detected, a \emph{dummy node} is placed in the sibling subgraph at level $d$ to break the chain.

A dummy node is a logical node which requires $O(\log n)$ links and an identifier. A dummy node does not hold any data and only used for routing purpose. To implement dummy nodes, all regular nodes need to have the ability to handle extra $O(\log n)$ links.

When a dummy node is placed to break a chain, the identifier is picked by checking the identifier of a neighbor of the dummy node at level $d$ to ensure that all identifiers remain sorted at the base level of $S_{t+1}$. A dummy node does not participate in transformation and destroys itself when a transformation notification is received. While being destroyed, a dummy node simply links its left and right neighbors at all levels and deletes itself. The sole purpose of dummy nodes is to ensure that a-balance property is preserved after a transformation. Note that, the maximum number of dummy nodes possible is $n/a$.

%\vspace{-0.2in}
\presec
\subsection{Node addition/removal} \label{sec:add-remove}
\postsec
Nodes can be added or removed by using standard node addition or removal procedures for skip graphs. When a node is added, the new node needs to initialize its variables with the default (initial) values. Following a node addition, the new node checks if the a-balance property is violated due to the node addition. Following a node deletion. a neighbor of the deleted node from each level checks if the a-balance property is violated due to node deletion. In case of a violation, a dummy node is placed to protect the a-balance property, as described in Section \ref{sec:a-bal}.

\presec
\section{Approximate Median Finding for Skip Graphs (AMF)}
\label{sec:amf}
\postsec

Given a linked list of size $n$ with each node holding a value, \textsf{AMF} is a distributed algorithm that finds an approximate median of the values in expected $O(\log n)$ rounds. Given that the size of the linked list is bigger than a constant $a$, we first construct a probabilistic skip list where the left-most node steps up to the next level with probability 1, and all other nodes step up to the next level with a probability $1/a$. After stepping up to the next level, nodes find their neighbors linearly from the level it stepped up. We write two consecutive nodes are supported by $k$ nodes if they have $k - 1$ nodes in between at the immediate lower level. When a linked list at some  level is built, nodes locally check if two consecutive nodes are supported by at least $a/2$ and at most $2a$ nodes. If two or more consecutive nodes are supported by less than $a/2$ nodes, they select the node with the highest identifier as a leader, and leader asks some nodes to step down to make sure each consecutive nodes in the list is supported by at least $a/2$ modes. Similarly, if two consecutive nodes are supported by more than $2a$ nodes, the node with the higher identifier asks some nodes in between to step up so that no two consecutive nodes in the list are supported by more than $2a$ nodes. The construction ends when the left-most node become the member of a singleton list (i.e. the root of the skip list) at some level. Let $l_i$ denote the linked list at level $i$ of the skip list. We refer the base level as level 0. Let $h$ be the height of the skip list, then level $h$ is the only level where the left-most node is singleton. The left-most node (i.e. root) broadcasts the value $h$ to all nodes of the skip list. Let $h = \log_b n$, then $2a \geq b \geq a/2$.

Median finding algorithm is a recursive algorithm that works in rounds. At the first round each node $x \in l_0$, $x \notin l_1$ forwards their values to their left neighbors, and any value received from the right neighbor is also forwarded to the left neighbor. This way values hold by all nodes that did not step up to level 1 are gathered to their nearest left neighbor that stepped up to level 1 (nodes in $l_1$). For implementation, while forwarding the value to the left neighbor, a node adds a ``last-node" tag with its value if it has an immediate right neighbor that has lifted to the upper level. When a node in linked list $l_1$ receives such tag, it can move to the next step knowing that it will not receive any more value from level 0.

Each node $x \in l_1$ is expected to have $a$ values including its own. All nodes $x \in l_1$ but $x \notin l_2$ forward all values they have to the nearest left neighbor that stepped up at level 2. Therefore, all nodes $x \in l_3$ are expected to have $a^2$ values. This process continues until nodes at level $\left \lceil \log _{a/2} h \right \rceil + 2$ (note that  $\log _{a/2} h = \log_{a/2}  \log _b n)$  receives all their values. Clearly each node $x \in  l_{\left \lceil \log _{a/2} h \right \rceil+ 2}$ must receive at least $(a/2)^{(\left \lceil \log _{a/2} h \right \rceil+ 2)} = \frac{a^2h}{4}$ values.

These values are sorted locally by the nodes $x \in  l_{\left \lceil \log _{a/2} h \right \rceil+ 2}$, and each node uniformly samples $ah$ values from the sorted list. Nodes keep only the sampled values for the next round and discard all other values they have. Nodes that did not step up to any further level forward their sampled values to the nearest left-neighbor that stepped up to the next level.  A similar tagging mechanism explained earlier can be used for the implementation purpose. It is important to note that this algorithm satisfies the $\mathcal{CONGEST}$ model since all messages are $O( \log n)$ in size. The  process of gathering values in the nearest left neighbor at upper level and sampling them continues recursively until the left-most node of the skip list receives all the (expected $a^2h$) values at level $h$ (the top level).

Each value forwarded by any node is attached with a \textit{left rank} and a \textit{right rank}. The left (right) rank attached with a value is the number of nodes in $l_\alpha$ that are guaranteed to have a larger (smaller) value than the value. Before every sampling, each node computes their \textit{left rank} and \textit{right rank}. Initially (at the base level) each node at the base level set both the left and right ranks attached with their value to zero. When a list of values are locally sorted by any node, the node computes the new left and right ranks for all the sampled values. The new left rank of a value is computed by adding the left ranks attached with all larger values in the sorted list. Similarly the new right rank of a value is computed by adding all the attached right ranks attached with the smaller values in the sorted list. Nodes forward their sampled values with the computed left and right ranks to the nearest left neighbor at the current level of the skip list.

When the left-most node at the top level of the skip list receives values from level $h - 1$ (the second highest level), it computes the median  based on the left and right rank attached with the values and then broadcasts the value to all nodes in $l_0$ as the approximate median.

The algorithm \textsf{AMF} is summarized in Algorithm 2.

\begin{algorithm}[h]
\label{alg:AMF}
\DontPrintSemicolon
%\KwData{$G=(X,U)$ such that $G^{tc}$ is an order.}
%\KwResult{$G’=(X,V)$ with $V\subseteq U$ such that $G’^{tc}$ is an interval order.}
\caption{Approximate Median Finding \textsf{AMF}\label{AMF}}

Construct a probabilistic skip list where the left-most node steps up to next level with probability 1 and all other nodes step up to next level with probability $1/a$, where $a$ is a constant and parameter for a-balance property. While the linked list at any level is being built, nodes locally ensure that no two consecutive nodes are supported by less than $a/2$ or more that $2a$ nodes.\;
All nodes $x \in l_0$ but $x \notin l_1$ forward their value and any value received from right neighbor (at base level of skip list) to the left neighbor.  A tagging mechanism explained in section \ref{sec:amf} can be used for the synchronization purpose. \;

\ForAll {levels $d = 1,2, ... , h-2$ in the probabilistic skip list, sequentially}{

	all nodes $x \in l_d$, $x \notin l_{d+1}$ forward (using level $d$ links) the values they have to the nearest left neighbor that stepped up to the level $d+1$.    \;

	\If{$d \geq \left \lceil \log _{a/2} h \right \rceil + 1$}{ all nodes in $l_{d+1}$ locally sort all the values they received, uniformly sample  $ah$ values from the sorted list, and compute new left and right ranks for all the sampled values. Nodes keep only the sampled values for the next level and discard all other values. }

}

All nodes at level $h-1$ (except the left-most node) forward their sampled $ah$ values to the left-most node. \;

The left-most node (and also the only node in level $h$) sort all the values it receives, computes new left and right ranks for all values, and finds the approximate median from the sorted list based on the left and right ranks. The approximate median is then broadcasted to all nodes of the base level.    \;

\end{algorithm}

\begin{lemma}
\label{ApproxMedianLemma}
Given a linked list of $n$ nodes (each with a value), the algorithm \textsf{AMF} outputs a value within the range of ranks $\frac{n}{2}  \pm \frac{n}{2a}$.
\end{lemma}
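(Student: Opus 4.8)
The plan is to follow the three phases of \textsf{AMF} and to carry, level by level of the constructed skip list, two facts about every value still alive: that its attached left and right ranks are valid lower bounds on the number of input values respectively above and below it, and that the slack these bounds leave grows only slowly. First I would record the structural facts the construction guarantees and then use freely: since every two consecutive nodes at every level are supported by between $a/2$ and $2a$ nodes below, (a) the height is $h=\log_b n$ with $a/2\le b\le 2a$; (b) a node at level $d$ is the root of a contiguous block of between $(a/2)^d$ and $(2a)^d$ input nodes, and the blocks of its children partition its block into contiguous sub-blocks; and (c) as already observed before the lemma, a node at the cutoff level $\lceil\log_{a/2}h\rceil+2$ owns at least $\frac{a^2h}{4}$ input nodes, so the list it first sorts and samples has at least that many entries. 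I read ``uniformly sample $ah$ values'' as taking an evenly-spaced sample of the sorted list, which is what makes a worst-case statement possible; with randomized sampling the gap bounds below instead hold with high probability.

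Second, the soundness invariant. Call the input nodes in a node's block its scope. The claim to maintain is: for every value $v$ alive at a node with scope $C$, the attached left rank is at most $\#\{w\in C:\text{value}(w)>\text{value}(v)\}$ and the attached right rank at most $\#\{w\in C:\text{value}(w)<\text{value}(v)\}$, so $\ell(v)+r(v)+1\le|C|$. Base case: at the cutoff level the node holds its whole scope, sorts it, and assigns exact ranks. Inductive step: the node sorts the surviving values it received and its update rule credits $v$ only with input nodes the sort certifies lie above (resp.\ below) $v$; soundness then hinges on the sets of input nodes credited through different surviving values being pairwise disjoint, which follows from the children's blocks being disjoint. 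Here I would have to pin down that the rule ``add the left ranks of all larger values in the sorted list'' is implemented so that those sets really are disjoint --- e.g.\ by having each surviving value carry the count of input nodes it absorbed at the previous sampling step rather than a raw rank --- since a naive reading would double-count within a single child's block; getting this bookkeeping right is one concrete task.

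Third, the error bound, obtained by running the same induction while tracking $\mathrm{slk}(v):=|C|-1-\ell(v)-r(v)\ge 0$. When a parent re-sorts the values from its $\le 2a$ children and re-ranks a surviving $v$, the input nodes of the enlarged scope left uncredited on the correct side of $v$ are, for each child $Y'$, precisely those $Y'$ discarded that fall into the single sampling gap of $Y'$ straddling $v$'s position in $Y'$'s sorted list --- at most $\lceil|W_{Y'}|/(ah)\rceil$ of them, where $W_{Y'}$ is the list $Y'$ sampled, because the evenly-spaced sample makes every gap have $O(|W_{Y'}|/(ah))$ discarded entries. Above the cutoff a node's list has $O(a^2h)$ entries, so each sampling level adds only $O(a^2)$ to the slack; the cutoff level itself contributes $\mathrm{poly}(a,\log n)$; with $O(\log_a n)$ sampling levels, any value reaching the root has slack $\mathrm{poly}(a,\log n)$, below $\frac{n}{4a}$ once $n$ exceeds a constant depending on $a$ (and \textsf{AMF} is invoked only on lists longer than $a$ anyway).

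Finally I would close at the root. It holds $O(a^2h)$ sorted values with sound ranks and outputs the value $v^{*}$ whose left and right ranks are most balanced. Writing $L,R$ for the true global counts of input values above and below $v^{*}$, one has $L+R=n-1$, $L-\ell(v^{*})\ge 0$, $R-r(v^{*})\ge 0$, $(L-\ell(v^{*}))+(R-r(v^{*}))=\mathrm{slk}(v^{*})$, so the rank $R+1$ of the output obeys $|R+1-\frac n2|\le\frac12\bigl(|\ell(v^{*})-r(v^{*})|+\mathrm{slk}(v^{*})\bigr)+\frac12$. The slack term is $\le\frac{n}{4a}$ by the previous step. For the balance term: even sampling keeps, at every node, both a near-minimal and the maximal element, so along the root's sorted list $\ell-r$ starts positive, ends negative, and between consecutive survivors jumps by only $O\bigl(\frac{n}{a^2h}\bigr)$ (the weight a survivor carries from its child's block); hence the best $v^{*}$ has $|\ell(v^{*})-r(v^{*})|=O\bigl(\frac{n}{a^2h}\bigr)\le\frac{n}{4a}$ except for the smallest $n$ (for which the cutoff level exceeds $h$, no sampling occurs, and the root returns an exact median). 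Combining the two contributions shows the rank of the output lies within $\frac n2\pm\frac{n}{2a}$. The step I expect to be the real obstacle is this slack bound together with the disjointness it rests on: making rigorous both that a single re-sort ``forgets'' only one sampling gap per child and that these errors accumulate additively rather than multiplicatively across the $\Theta(\log_a n)$ levels, all while preserving the disjointness that keeps the rank sums genuine lower bounds --- a loose accounting here would either cost a $\log$ factor or invalidate the invariant outright.
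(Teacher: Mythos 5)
Your route differs from the paper's. The paper fixes the two consecutive values $m_l \ge m \ge m_r$ in the root's sorted list that bracket the true median, and bounds the number of original inputs whose values fall strictly between them: it argues that each cutoff-level node contributes at most a $1/a$ fraction of its own scope to this range, so fewer than $n/a$ inputs lie between $m_l$ and $m_r$, whence one of the two has true rank in $\frac{n}{2} \pm \frac{n}{2a}$. You instead maintain a soundness-plus-slack invariant on the attached ranks and close by arguing that the root's balancing rule must actually select a value with small rank error. Your version is in one respect more honest: the paper simply asserts that ``either $m_l$ or $m_r$ is picked,'' which silently assumes the rank estimates are accurate enough to steer the selection --- exactly the point your invariant is built to certify. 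Your observation that the literal re-ranking rule (``add the left ranks of all larger values'') double- or under-counts and needs a disjointness-preserving reformulation is also correct and worth stating; the paper never confronts it.

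There is, however, a genuine quantitative error in your slack accumulation. You count, per child and per level, the discarded \emph{summary values} falling in the one sampling gap straddling $v$ --- about $|W_{Y'}|/(ah) = O(a)$ of them --- and conclude that each level adds $O(a^2)$ to the slack, hence $\mathrm{poly}(a,\log n)$ in total. But $\mathrm{slk}(v)$ is measured in \emph{original input nodes}, and above the cutoff level a discarded summary value is not one input node: it is the sole witness for a block of roughly $|C_{Y'}|/(a^2h)$ inputs, all of which become uncredited when it is dropped. The correct increment at level $d$ is therefore $\Theta(|C_d|/(ah))$, which grows geometrically with $d$, and the sum over levels gives total slack $\Theta(n/(ah))$ at the root --- nowhere near polylogarithmic, though still below $n/(4a)$ once $h$ exceeds a small constant. (Your own imbalance bound $O(n/(a^2h))$ correctly uses the block weight a survivor carries, so the two halves of your argument are mutually inconsistent.) The lemma survives under the corrected accounting, but the slack step must be redone with block weights rather than raw counts of discarded entries, and the small-$h$ regime, where the claimed inequalities fail, handled separately.
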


\begin{proof}
Let $m$ be the actual median value, and $m_l$ and $m_r$ be two consecutive values in the sorted list of values received by the left-most node of the skip list at the level $h$,  such that $m_l \geq m \geq m_r$. Clearly either $m_l$ or $m_r$ is picked as the approximate median, determined by their final right and left ranks. To prove this lemma, we shall quantify the maximum possible number of values discarded from range ($m_l$, $m_r$) at any level.

Let $S^x_d$ denote the set of values that node $x$ receives at level $d$ from nodes at level $d - 1$  (including own values of node $x$). From the construction of the skip list, it is ensured that no two consecutive nodes are supported by less than $a/2$ or more than $2a$ nodes. Therefore, size $|S^x_d|$ for any node $x$ at any level $d > \left \lceil \log _{a/2} h \right \rceil + 2$ must be in between $a^2 h/2$ and $2a^2h$. Let $I^x_d$ denote the sampling interval for node $x$ at level $d$. Obviously, $I^x_d =\frac{|S^x_d|}{ah} - 1$.

Each node $x$ from level $h-1$ (i.e. $x \in l_{h-1}$) contributes $ah$ values to the sorted list processed by the left-most node at level $h$. Therefore, any node $x \in l_{h-1}$ can discard at most $I^x_{h-1}$ values between $m_l$ and $m_r$. To maximize the number of values discarded between $m_l$ and $m_r$, each of $I^x_{h-1}$ values between $m_l$ and $m_r$ in $S^x_{h-1}$ must come from different nodes at the lower level $h-2$. Let $s \in S^x_{h-1}$, thus $s$ is one of the sampled values from node $y \in l_{h-2}$. if $m_l < s < m_r$, then there can be at most $2I^y_{h-2}$ values from range $(m_l,m_r)$  in $S^y_{h-2}$ (considering two sampling intervals from both sides). Hence, $S^y_{h-2}$ can have at most $2I^y_{h-2} + 1$ values at level $h-2$ from range $(m_l,m_r)$.

Again to maximize the number of discarded values from range $(m_l,m_r)$, all these $2I^y_{h-2} + 1$ values must come from all possible $I^y_{h-2} + 1$ nodes at level $h-3$; only one node from level $h-3$ contributes one (sampled) value, and each of rest of the $I^y_{h-2}$ nodes contributes 2 values each. Therefore, for any node $x \in l_{h-3}$, there can be at most $3I^x_{h-2} + 2$ values from range $(m_l,m_r)$  in $S^x_{h-2}$.

Similar reasoning can show that there can be at most $(h - k- 1) I^x_k + k$ values from range $(m_l,m_r)$  in $S^x_k$ for any node $x \in l_k$, where $k = \left \lceil \log _{a/2} h \right \rceil + 2$. However,  $S^x_k$ has $(ah + 1)I^x_k$ nodes and $k$ is the lowest level where values were discarded through sampling. Since $\frac{(h - k- 1) I^x_k + k}{(ah + 1)I^x_k} < \frac{1}{a}$, there are less than $n/a$ values in between $m_l$ and $m_r$. Thus at least one of the values between $m_l$ and $m_r$ must fall in the range of ranks $\frac{n}{2}  \pm \frac{n}{2a}$.
\end{proof}

\presec
\section{Analysis}
\label{sec:analysis}
\postsec

Appendix \ref{sec:appen_notation_table} presents a list of frequently used notations used in this section.

%\vspace{-0.1in}
\begin{lemma}
\label{TimestampLemma}
Let $g_d$ be a group at level $d$ and $x$ be a node in $g_d$ in skip graph $S$. Let $G_x (V,E)$ be a communication graph where  $V$ is the set of all nodes in $S$ and $E$ represents only communications during the time period starting from time $T^x_d$, and ending at time $t$ (inclusive). All nodes $y \in g_d$ with $T^y_d > T^x_d$ are connected in $G_x$.
\end{lemma}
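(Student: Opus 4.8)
The plan is to prove the statement by induction on the number of requests that produced $S$, strengthened to: the node set consisting of $x$ together with every $y \in g_d$ with $T^y_d > T^x_d$ induces a \emph{connected} subgraph of $G_x$. The base case (empty request sequence) is vacuous, since every group is then a singleton and every timestamp is $0$. For the inductive step, assume the strengthened claim holds after every proper prefix of the request sequence, let $\sigma_t=(u,v)$ be the last request (transforming $S_t$ into $S=S_{t+1}$, with the $(u,v)$ communication taking place at time $t$), and fix a group $g_d$ of $S$ and a node $x\in g_d$.

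If $g_d$ is disjoint from the base list $l_\alpha$ of the request, the transformation changes neither its membership, nor the membership vectors of its nodes, nor any timestamp $T^y_d$ with $y\in g_d$; in particular $T^x_d$ is unchanged. The only edges relevant to the induced subgraph join two nodes of $g_d$, and since $u,v\notin g_d$ the request at time $t$ adds no such edge. Hence the induced subgraph is exactly the one supplied by the inductive hypothesis for $S_t$, and is connected.

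Otherwise $g_d$ is either the merged group that contains $u$ and $v$ --- at level $\alpha$, at the levels above $\alpha$ on which $u$ and $v$ stay in a common list, and, via the group-id propagation of Section~\ref{sec:newGID}, at some levels below $\alpha$ --- or a group split off by Case~1 or by the branch $|g_s|>\tfrac23|l_d|$ of Case~2. The skeleton of the argument for the merged group is: (i) rule~T1 makes $u$ and $v$ carry, at each level $d$ of the merged group, a timestamp at least as large as $T^z_d$ for every other member $z$ of that group (on the levels where $u,v$ become a pair this value is $t$ itself); (ii) the request contributes the edge $\{u,v\}$ at time $t$, which lies inside the window $[T^x_d,t]$ of $G_x$; (iii) the inductive hypothesis, applied inside $u$'s (respectively $v$'s) pre-merge group at level $d$ in $S_t$, furnishes a connected subgraph of $G_x$ joining $u$ (respectively $v$) to every member of that group whose timestamp exceeds $T^x_d$. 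Composing (iii) on the two sides with the bridge edge of (ii) produces, for each $y$ in the merged group with $T^y_d>T^x_d$, a $G_x$-path from $y$ to $x$. The split-off groups are handled analogously, now using rule~T5 to tie a split group's level-$d$ timestamps to its level-$(d-1)$ timestamps and rule~T6 to zero the timestamps below the group-base $B_x$ --- the latter only widens the window $[T^x_d,t]$ and hence never makes the claim harder. To turn (iii) into a rigorous step I intend to carry the auxiliary invariant that every timestamp written by rules~T2--T6 is either $t$ or a copy of a value the same node held legitimately at a lower level $c$ at an earlier time, and to combine it with the way a node's group at level $d$ relates to its group at that lower level $c$, so that the connectivity certified at level $c$ can be replayed at level $d$.

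The principal obstacle lies exactly here: one has to verify, rule by rule over T1--T6 and over the three size regimes of the $g_s$-split, that no node ever ends up with a level-$d$ timestamp exceeding $T^x_d$ without actually sitting on a $G_x$-path to $x$, and, dually, that the nodes whose timestamps get reset (to an older value or to $0$) are precisely the ones that may safely fall out of the connected core. A secondary complication is that \textsf{AMF} returns only an \emph{approximate} median, so a non-communicating group can be spread across the $0$- and $1$-subgraphs even when it is not the designated $g_s$; to deal with this I will appeal to priority rules~P3 and P4, which assign each group a disjoint band of priorities, to show that such a group is in fact never cut --- so that its timestamps, and with them the connectivity the inductive hypothesis attaches to that group, carry over unchanged.
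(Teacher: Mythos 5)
Your proposal follows essentially the same route as the paper: induction over the request sequence, with a case split into (a) groups untouched by the transformation, (b) the merged group containing $u$ and $v$ (handled via T1/T2 plus the bridge edge $\{u,v\}$ at time $t$), and (c) groups split off during the transformation. The one substantive divergence is in how you treat the negative-$M$ split (the $|g_s|>\tfrac{2}{3}|l_d|$ branch): you propose to argue via timestamp propagation through rules T5 and T6, whereas the paper's argument there is structural rather than timestamp-based --- it observes that because nodes move according to the \emph{is-dominating-group} bit $D^x_d$, the two pieces of the split coincide exactly with groups that already existed at level $d+1$ in $S_t$, so the inductive hypothesis applies to each piece verbatim with no new timestamp bookkeeping. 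That observation is the cleaner way to close subcase (c), and your T5/T6 route would still owe a proof that the split pieces are internally connected, which timestamps alone do not give you. On the other hand, your explicit flagging of the fact that an approximate (rather than exact) median could in principle cut a non-communicating group other than $g_s$, resolved by the disjoint priority bands of P3/P4, is a point the paper's proof silently assumes; making it explicit is an improvement. Both your writeup and the paper's defer the rule-by-rule verification over T1--T6 that you correctly identify as the crux, so in terms of completed rigor the two are comparable.
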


%\vspace{-0.1in}
\begin{proof}
We present a proof by induction for this lemma. Let us assume that the lemma holds for all groups in $S_t$. We show that the lemma holds for $S_{t+1}$ as well. Let nodes $u_t$ and $v_t$ communicate at time $t$. By assumption, the lemma holds for all groups of nodes $u_t$ and $v_t$ in $S_t$. Now, since nodes $u_t$ and $v_t$ get connected in the communication graph at time $t$, priority rule P2 and timestamp rule T2 ensure that the lemma holds for any newly formed group containing nodes $u_t$ and $v_t$ in $S_{t+1}$.

Now, for any newly formed group in $S_{t+1}$ that does not contain nodes $u_t$ and $v_t$, there are two possibilities. The first possibility is that all nodes of such a new group had a positive priority during the transformation from $S_t$ to $S_{t+1}$ while receiving $M$ for level $d$. For this case, the lemma holds because (1) all these nodes were in at least one group of either $u_t$ or $v_t$ in $S_t$, and (2) timestamp rule T3 ensures that the new timestamps for the corresponding level are consistent with the lemma. The second possibility is that all node of the new group had a negative priority for level $d$ during the transformation. Such a new group can only be created by a split of a group of size greater than two-third of the number of nodes in the corresponding linked list. Because of the use of boolean variable is-dominating-group, these groups are identical of one of the groups in $S_t$ at level $d+1$. Hence the lemma holds.

Now we analyze the base case for induction. Clearly, the lemma holds for $S_1$ since all groups are the only member of their groups. Now, from the construction of the algorithm, it is easy to see that the lemma holds for $S_2$ as the groups with size $>1$ are the groups that contain nodes $u_1$ and $u_2$ and timestamp rules T1 ensures that the timestamps in $S_2$ are consistent.

\end{proof}

%\vspace{-0.1in}
\begin{lemma}
\label{NegativeM}
Let $g_d$ be a group at level $d$ in $S_t$ such that for all pair of nodes $(x,y) \in g_d$, distance $d_{S_t} (x,y) = O(\log T_t (x,y))$, where $T_t (x,y)$ is the working set number for the node pair $(x,y)$ at time $t$. If $g_d$ is split into 2 new (sub)groups at level $d$ in $S_{t+1}$ due to a negative $M$, for all pair of nodes $(x,y) \in g_d$, $d_{S_{t+1}} (x,y) = O(\log T_{t+1} (x,y))$.
\end{lemma}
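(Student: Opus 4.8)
The first thing to do is pin down what a negative $M$ actually does in the transformation. Since $M<0$, priority rule~P3 guarantees that $g_d$ is a non-communicating group (it contains neither $u$ nor $v$), and the case analysis of the transformation shows that such a group is genuinely \emph{split} only in the sub-case $|g_d|>\tfrac{2}{3}|l|$, where $l$ is the linked list carrying $g_d$ at the level of the split; in that sub-case the split is determined entirely by the is-dominating-group bits, namely $A=\{x\in g_d:D^x_d=False\}$ forms one child part and $B=\{x\in g_d:D^x_d=True\}$ forms the other. The core of the argument --- carried out exactly as in the proof of Lemma~\ref{TimestampLemma} --- is to observe that, by the invariant maintained for the bits $D^x_d$ (each bit records the side on which $x$ was placed the last time a \emph{positive} approximate median was applied at level $d$, which is precisely the event that last separated the members of $g_d$ at the next finer level), the partition $\{A,B\}$ of $g_d$ in $S_{t+1}$ coincides with the decomposition of $g_d$ into its two finer groups already present in $S_t$. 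Hence $A$ and $B$ are \emph{old} groups: each is an $a$-balanced sub-skip-graph, the hypothesis $d_{S_t}(x,y)=O(\log T_t(x,y))$ holds for every pair inside $A$ and every pair inside $B$, and each retains the structure it had in $S_t$ (up to a uniform shift of level indices caused by the recursive re-processing at higher levels).

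Two elementary observations are then used throughout. First, every node of $g_d$ stays inside the sub-skip-graph rooted at the transformed copy of $l$; because the split occurs only when $|g_d|>\tfrac{2}{3}|l|$, this sub-skip-graph has size $\Theta(|g_d|)$ and it is $a$-balanced in $S_{t+1}$, so $d_{S_{t+1}}(x,y)=O(\log|g_d|)$ for all $x,y\in g_d$. Second, since $x$ and $y$ do not communicate at time $t$ (their group is non-communicating), the time window defining their working set number is the same at time $t$ and time $t+1$ while the underlying communication graph only gains edges, so $T_{t+1}(x,y)\ge T_t(x,y)$.

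Now take an arbitrary pair $(x,y)\in g_d$. If $x$ and $y$ lie in the same part ($A$ or $B$), then since that part's internal structure is inherited from $S_t$ and the routing path between $x$ and $y$ stays inside it, $d_{S_{t+1}}(x,y)=d_{S_t}(x,y)=O(\log T_t(x,y))\le O(\log T_{t+1}(x,y))$ by the second observation. If $x\in A$ and $y\in B$, the first observation gives $d_{S_{t+1}}(x,y)=O(\log|g_d|)$, so it only remains to prove $\log T_{t+1}(x,y)=\Omega(\log|g_d|)$. For this, apply the hypothesis to a diametrically opposite pair of $g_d$ in $S_t$ --- which has distance $\Omega(\log|g_d|)$ since $g_d$ is $a$-balanced --- to obtain that some pair of $g_d$ already has working set number $|g_d|^{\Omega(1)}$; then use Lemma~\ref{TimestampLemma}, which guarantees that a constant fraction of the members of $g_d$ is pairwise reachable in the communication graph relevant to $x$ and $y$, to transfer this lower bound to $(x,y)$. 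This yields $T_{t+1}(x,y)=|g_d|^{\Omega(1)}$, hence $d_{S_{t+1}}(x,y)=O(\log|g_d|)=O(\log T_{t+1}(x,y))$, completing the argument.

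The step I expect to be the real obstacle is the structural claim of the first paragraph: one must check that the bits $D^x_d$ are never left stale by an intervening transformation --- so that the $D^x_d$-split really equals the finer-level decomposition of $g_d$ present in $S_t$ --- and that the recursive re-processing of $A$ and $B$ at higher levels never breaks their internal structure. A secondary difficulty is making the working-set lower bound for split pairs fully rigorous, since it requires relating the level-$d$ timestamps inside $g_d$ (through Lemma~\ref{TimestampLemma}) to the last time two given members of $g_d$ communicated.
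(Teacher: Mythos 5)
Your overall skeleton matches the paper's: you isolate the $|g_d|>\tfrac{2}{3}|l_d|$ sub-case, observe that the split is governed by the is-dominating-group bits so that the two parts are the finer groups already present in $S_t$, dispose of same-part pairs by monotonicity of the working set number, and reduce cross-part pairs to showing $T_{t+1}(x,y)=|g_d|^{\Omega(1)}$. The gap is exactly where you flagged a ``secondary difficulty'': your proposed lower bound on $T_{t+1}(x,y)$ for a pair separated by the split does not go through. The working set number is pair-specific --- its time window starts at the last communication between $x$ and $y$ themselves --- so knowing that some diametrically opposite pair of $g_d$ has a large working set number tells you nothing about $T_{t+1}(x,y)$ for your particular cross pair; if $x$ and $y$ had communicated recently, the communication graph ``relevant to $x$ and $y$'' could be tiny even though $g_d$ is large, and Lemma~\ref{TimestampLemma} cannot be invoked for that window without first anchoring where the window starts.

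The paper closes this with a pair-specific timestamp argument that your write-up is missing. Every node $z$ sent to the 1-subgraph has $D^z_d=True$, which was last set when a \emph{positive} approximate median $M_p$ formed its group, and timestamp rule T2 then forces $T^z_d\ge M_p$; Lemma~\ref{TimestampLemma} applied to that group shows that all 1-subgraph nodes are connected by communications occurring \emph{after} time $M_p$, while $x$ (on the 0-side) has not communicated with any of them since $M_p$ --- in particular the last $(x,y)$ communication predates $M_p$. Hence the window defining $T_{t+1}(x,y)$ contains all of these post-$M_p$ communications, so $T_{t+1}(x,y)$ is at least the number of nodes moved to the 1-subgraph, which by the $\tfrac{2}{3}$ threshold and the balance guarantees of Lemma~\ref{ApproxMedianLemma} is a constant fraction of $|l_d|$; combined with $d_{S_{t+1}}(x,y)\le a\log_{3/2}\bigl(\tfrac{n'}{2}+\tfrac{n'}{2a}\bigr)+a$ this gives the claim. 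You should replace the ``diametrically opposite pair plus transfer'' step with this anchoring of the working-set window at $M_p$; the rest of your argument is consistent with the paper's.
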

\vspace{-0.1in}
\begin{proof}
Let us consider a pair (x,y) such that the distance between nodes $x$ and $y$ increased due to the split (i.e. $d_{S_{t+1}} (x,y) > d_{S_t} (x,y)$). Then one of the nodes from pair $(x,y)$ must move to the 0-subgraph and the other node must move to the 1-subgraph at level $d$ in $S_{t+1}$. Let us assume that, node $x$ moves to the 0-subgraph and node $y$ moves to the 1-subgraph.

A group can split due to a negative $M$ only if the size of the group is bigger than two-third of the size of the corresponding linked list. From Section \ref{sec:transformation}, clearly nodes moving to the 1-subgroup (due to a split resulted by negative $M$) have $D^x_d = True$. Now, nodes set is-dominating-group as $True$ only on formation of a group (due to a positive $M$) that contains the communicating nodes. Hence, according to the timestamp rule T2, all nodes $x$ with $D^x_d = True$ have $T^x_d \geq M_p$, where $M_p$ is the positive priority used (in past) to set $D^x_d = True$. This implies that, after time $M_p$, the node $x$ did not communicate with any of the nodes in the group moving to 1-subgraph. According to the definition of working set number, this implies that at least $T_t (x,y)$ nodes move to the 1-subgraph due to the split resulted by a negative $M$.

We construct a communication graph $G_M$ with communications during the time period starting from time $M_p$ and ending at the current time $t$ (inclusive). According to Lemma \ref{TimestampLemma}, all nodes moved to the 1-subgraph are connected in $G_M$. Lemma \ref{ApproxMedianLemma} implies that the number of nodes moved to the 1-subgraph is at most $\frac{n^\prime}{2}  + \frac{n^\prime}{2a}$, where $n^\prime$ is the size of the corresponding linked list. This follows:
%\vspace{-0.12in}
\begin{equation} \label{eq:nmEq2}
d_{S_{t+1}} (x,y) \leq a \log_{3/2} \bigg( \frac{n^\prime}{2}  + \frac{n^\prime}{2a} \bigg) + a = O(\log T_{t+1} (x,y))
\end{equation}

\end{proof}

%\vspace{-0.1in}
\begin{lemma}
\label{UVHeightLemma}
Given that nodes $u$ and $v$ communicate at time $t$, there exists a direct link between nodes $u$ and $v$ in $S_{t+1}$ at a level no higher than $\log _{\frac{2a}{a+1}} n$.
\end{lemma}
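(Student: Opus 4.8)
My plan is to follow the chain of linked lists that contain both $u$ and $v$ as the transformation from $S_t$ to $S_{t+1}$ proceeds level by level, and to show that this chain shrinks by a constant factor at every level, so that it collapses to the pair $\{u,v\}$ after $O(\log_{\frac{2a}{a+1}} n)$ levels. \emph{Step 1 (the pair always descends into the $0$-subgraph).} I would first record the invariant that at every level $d>\alpha$ of the transformation, $u$ and $v$ lie in the same newly created list, namely the $0$-subgraph. By priority rule P1 we have $P(u)=P(v)=\infty$, so whenever the split is made by comparing priorities against the approximate median $M$ (Case~1, and the priority clauses inside Case~2), both nodes satisfy $P(\cdot)\ge M$ and hence move to the $0$-subgraph; in the three group-preserving branches of Case~2 every node outside the split group $g_s$ is routed to the $0$-subgraph, and $u,v\notin g_s$ since by rule P3 and the sign of $M$ the group $g_s$ is non-communicating. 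Let $l_\alpha = l^{(0)}, l^{(1)}, l^{(2)},\dots$ be the resulting chain, where $l^{(j+1)}\subseteq l^{(j)}$ is the $0$-subgraph split off from $l^{(j)}$ at level $\alpha+j+1$; every $l^{(j)}$ contains $u$ and $v$.

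\emph{Step 2 (geometric shrink and the level count).} When $l^{(j)}$ is split using \textsf{AMF}, Lemma~\ref{ApproxMedianLemma} guarantees that the chosen approximate median has rank within $\frac{|l^{(j)}|}{2}\pm\frac{|l^{(j)}|}{2a}$, so at most $\frac{|l^{(j)}|}{2}+\frac{|l^{(j)}|}{2a}=\frac{a+1}{2a}\,|l^{(j)}|$ nodes have priority at least $M$; hence $|l^{(j+1)}|\le\frac{a+1}{2a}\,|l^{(j)}|$ in Case~1. I would then verify that the same bound survives the branches of Case~2: when $\frac13|l^{(j)}|\le|g_s|\le\frac23|l^{(j)}|$ the $0$-subgraph receives exactly $|l^{(j)}|-|g_s|\le\frac23|l^{(j)}|$ nodes; when $|g_s|<\frac13|l^{(j)}|$ it receives the smaller of $L_{low},L_{high}$ together with $g_s$; and when $|g_s|>\frac23|l^{(j)}|$ the is-dominating-group bookkeeping sends the members of $g_s$ that had earlier joined a $0$-subgraph to the $1$-subgraph, leaving at most a $\frac{a+1}{2a}$ fraction on the $0$-side. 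Since $|l^{(0)}|=|l_\alpha|\le n$, we get $|l^{(j)}|\le\big(\tfrac{a+1}{2a}\big)^{j}n$, which is at most $2$ as soon as $\big(\tfrac{2a}{a+1}\big)^{j}\ge n/2$, i.e.\ once $j\ge\log_{\frac{2a}{a+1}}(n/2)$; at that point $l^{(j)}=\{u,v\}$ and the direct $u$–$v$ link is in place. To turn the ``number of transformation levels'' bound into a bound on the \emph{absolute} level $\alpha+j$, I would use the refinement that a list at level $\alpha$ of an $a$-balanced skip graph has at most $n\,(\tfrac{a}{a+1})^{\alpha}+O(a^2)$ nodes (each upward split shrinks a list by at least the factor $\tfrac{a}{a+1}$ up to an additive constant), so that $|l^{(j)}|\le\big(\tfrac{a+1}{2a}\big)^{j}\big(n(\tfrac{a}{a+1})^{\alpha}+O(a^2)\big)$, and this drops to $2$ at an absolute level no higher than $\log_{\frac{2a}{a+1}} n$.

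The delicate part is Step~2 for the branches of Case~2, in particular the ``$|g_s|>\frac23|l^{(j)}|$'' branch, which relies on the is-dominating-group variables to split $g_s$ so that roughly half its nodes go to each side; I need the invariant about is-dominating-group used for Lemma~\ref{NegativeM} to conclude that this branch cannot push the $0$-subgraph above the $\frac{a+1}{2a}$ fraction. A secondary technicality is that the dummy nodes inserted to restore the $a$-balance property enlarge each list only by an $O(1/a)$ fraction, so they affect the constant in the base of the logarithm but not the $O(\log n)$ order; one should also argue that such a dummy is never placed strictly between $u$ and $v$, so that the collapse of $l^{(j)}$ to $\{u,v\}$ genuinely produces a direct $u$–$v$ link rather than a list of size three.
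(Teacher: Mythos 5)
Your Step 1 and the Case-1 half of Step 2 are essentially the paper's entire proof: the paper observes that the group containing $u,v$ can split only under a positive $M$, invokes Lemma~\ref{ApproxMedianLemma} to bound the $0$-side by $\frac{n}{2}+\frac{n}{2a}=\frac{a+1}{2a}n$, and concludes; it says nothing about what happens to the list containing $u,v$ at negative-$M$ levels. Everything you add beyond that is where the argument does not close, and the ``delicate parts'' you flag are genuine gaps rather than bookkeeping. Concretely: in the branch $\frac13|l^{(j)}|\le|g_s|\le\frac23|l^{(j)}|$ the $0$-subgraph can retain $\frac23|l^{(j)}|$ nodes, and $\frac23>\frac{a+1}{2a}$ as soon as $a>3$; in the branch $|g_s|<\frac13|l^{(j)}|$ with $L_{high}<L_{low}$ the $0$-subgraph can retain up to $|L_{high}|+|g_s|<\frac12|l^{(j)}|+\frac13|l^{(j)}|=\frac56|l^{(j)}|$; and the branch $|g_s|>\frac23|l^{(j)}|$ hinges on how many members of $g_s$ carry $D^x_d=\mathit{True}$, which neither you nor the paper quantifies (if few do, almost all of $g_s$ stays on the $0$-side). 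So the per-level contraction $\frac{a+1}{2a}$ is established only at positive-$M$ levels; what holds uniformly is the $\frac23$ factor, i.e.\ a $\log_{3/2}n$ bound matching Lemma~\ref{HeightLemma}, not the stated base.

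A second unresolved point is your conversion from ``number of transformation levels above $\alpha$'' to the absolute level $\alpha+j$. The bound you propose for $|l_\alpha|$, of the form $n\left(\frac{a}{a+1}\right)^{\alpha}$ plus lower-order terms, contracts per level at rate $\frac{a}{a+1}$, which exceeds $\frac{a+1}{2a}$ for $a\ge 3$; hence $\left(\frac{a+1}{2a}\right)^{j}n\left(\frac{a}{a+1}\right)^{\alpha}\le 2$ does not force $\alpha+j\le\log_{\frac{2a}{a+1}}n$. The paper sidesteps both issues by implicitly treating every level on the $u$--$v$ chain as a positive-$M$ split. In short: you take the same route as the paper, but the two technicalities you honestly identify are real, they are not resolved by your sketch, and they are not resolved by the paper's own one-paragraph proof either; as written, only an $O(\log n)$ bound with a weaker constant is actually justified.
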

\vspace{-0.1in}
\begin{proof}
As described in the case 1 in Section \ref{sec:transformation}, any group containing communicating nodes $u$ and $v$ at any level $d$ can split at level $d + 1$ only if the nodes in the group receive a positive $M$. Now, according to Lemma \ref{ApproxMedianLemma}, a positive $M$ can split a subgraph of size $n$ into two subgraphs where size of any of the new subgraphs is at most $\frac{n}{2} + \frac{n}{2a}$.  Since communicating nodes always receive a positive $M$, the maximum possible height in $S_{t+1}$ at which nodes $u$ and $v$ move to a subgraph of size 2 is $\log _{\frac{n}{\frac{n}{2} + \frac{n}{2a}}} n = \log _{\frac{2a}{a+1}} n$.

\end{proof}
%\qed

%\vspace{-0.1in}
\begin{lemma}
\label{HeightLemma}
The maximum possible height after a transformation by \textsf{DSG} is $\log _{\frac{3}{2}} n$.
\end{lemma}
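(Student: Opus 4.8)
The plan is to derive the height bound from a single structural invariant of \textsf{DSG}: \emph{every split it performs partitions a base linked list of size $m$ into two sublists, each of size at most $\tfrac{2}{3}m$}. Granting this, the size of the list underlying a sublist shrinks by a factor of at least $\tfrac{3}{2}$ at each level, so after $\log_{3/2} n$ levels every list is a singleton. Since a \textsf{DSG} transformation rebuilds only the subgraph rooted at $l_\alpha$ and leaves every other sublist boundary intact, the invariant is preserved along the whole access sequence (with $S_0$, a canonically balanced skip graph, as the base case), and the height is always at most $\log_{3/2} n$. The reason $\tfrac{2}{3}$ is the right target is that it matches Lemma~\ref{ApproxMedianLemma}: as long as $a \geq 3$ one has $\tfrac{m}{2} + \tfrac{m}{2a} \leq \tfrac{2}{3}m$, so an \textsf{AMF}-median split is automatically $\tfrac{2}{3}$-balanced, and $\tfrac{2}{3}$ is also exactly the threshold the rules of Section~\ref{sec:transformation} use to decide when a group is ``too big'' to keep in one sublist.

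Proving the invariant is a case analysis over the splitting rules. First, every split of a list containing the communicating pair, and more generally every split driven by a positive approximate median $M$, is handled verbatim by the argument of Lemma~\ref{UVHeightLemma}: nodes are separated by comparison with $M$, so Lemma~\ref{ApproxMedianLemma} bounds each side by $\tfrac{m}{2} + \tfrac{m}{2a} \leq \tfrac{2}{3}m$. A negative-$M$ split with no group satisfying Equation~\ref{eq:gr_split} is the same comparison and obeys the same bound. When a group $g_s$ satisfying Equation~\ref{eq:gr_split} is present, the sub-case $\tfrac{1}{3}m \leq |g_s| \leq \tfrac{2}{3}m$ is immediate, since $g_s$ moves wholesale to the $1$-subgraph (size $|g_s| \leq \tfrac{2}{3}m$) and the remaining $m - |g_s| \leq \tfrac{2}{3}m$ nodes to the $0$-subgraph; and the sub-case $|g_s| < \tfrac{1}{3}m$ reduces to an $M$-split of the $m - |g_s|$ non-$g_s$ nodes --- each side at most $\tfrac{m}{2} + \tfrac{m}{2a}$ by Lemma~\ref{ApproxMedianLemma} --- followed by appending the small block $g_s$ to whichever side is currently smaller, after which a one-line count keeps both sides below $\tfrac{2}{3}m$.

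The remaining sub-case, $|g_s| > \tfrac{2}{3}m$, is the main obstacle, because then $g_s$ is split not against $M$ but according to the boolean flags $D^x_d$. The crux is to show these flags record a \emph{prior} \textsf{AMF}-median split at level $d$: by construction, $D^x_d$ is $\mathrm{True}$ exactly when $x$ landed in the $0$-subgraph the last time it received a positive $M$ at level $d$, and since \textsf{DSG} merges two groups at level $d-1$ only by immediately re-splitting the merged group at level $d$ --- which rewrites all of its flags consistently --- every member of the current $g_s$ inherits its flag from one common such split. Lemma~\ref{ApproxMedianLemma} applied to that split then bounds the number of $\mathrm{True}$-flagged nodes in $g_s$ both above (by $\tfrac{2}{3}$ of the list it split) and below (by $\tfrac{1}{3}$ of it); together with the fact that all of $l_{d-1} \setminus g_s$, which is fewer than $\tfrac{1}{3}m$ nodes, goes to the $0$-subgraph, this pins the sizes of the $0$- and $1$-subgraphs below $\tfrac{2}{3}m$. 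Making this bookkeeping airtight --- precisely, tracking how $|g_s|$ relates to the size of the list underlying its $D$-flags after an arbitrary sequence of merges and splits, and invoking flag consistency in the spirit of Lemma~\ref{TimestampLemma} --- is where essentially all the work of the proof sits; once the invariant is in hand, iterating the $\tfrac{2}{3}$ factor over the levels yields the $\log_{3/2} n$ bound.
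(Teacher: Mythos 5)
Your overall strategy is exactly the paper's: show that every split performed by \textsf{DSG} leaves each side with at most $\tfrac{2}{3}$ of the parent list, then iterate to get height $\log_{3/2} n$. In fact the paper's entire proof is that assertion --- it states that ``from Section \ref{sec:transformation} it is easy to see'' that no sublist exceeds $\tfrac{2}{3}$ of its parent, and stops there. So the case analysis you attempt is precisely the content the paper omits, and you have correctly located where the difficulty lives.

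That said, your elaboration does not close the argument, and in two places it is not merely incomplete but shaky. First, in the sub-case $|g_s| > \tfrac{2}{3}|l_d|$ you defer everything to the claim that the number of $\mathrm{True}$-flagged nodes in $g_s$ is pinned between $\tfrac{1}{3}$ and $\tfrac{2}{3}$ of the \emph{current} list. Lemma \ref{ApproxMedianLemma} only bounds that count relative to the list that existed when the flags were set; after intervening merges, splits of other groups, and node insertions/deletions, $l_{d-1}$ need not be that list, and you acknowledge you have not done the bookkeeping. This is the actual open step, and saying ``this is where the work sits'' is not a proof of it. Second, in the sub-case $|g_s| < \tfrac{1}{3}|l_d|$ your ``one-line count'' is wrong as stated: if $|L_{high}| < |L_{low}|$, all of $g_s$ joins the $0$-subgraph, whose size is then $|L_{high} \cup g_s| = |L_{high}| + |g_s \cap L_{low}|$, which can approach $\tfrac{|l_d|}{2} + \tfrac{|l_d|}{3} = \tfrac{5}{6}|l_d|$ when $g_s \subseteq L_{low}$. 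That still yields $O(\log n)$ height (base $\tfrac{6}{5}$), but not the constant $\log_{3/2} n$ claimed by the lemma --- a defect shared by the paper's proof, which never checks this case at all. So: same approach, more honest about where the difficulty is, but the two hard sub-cases remain unproved (and one of your intermediate claims about them is false as written).
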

\vspace{-0.1in}
\begin{proof}
Algorithm \textsf{DSG} split a subgraphs into two smaller subgraphs at the immediate upper level. From Section \ref{sec:transformation}, it is easy to see that a subgraph of size $n$ can split into two subgraphs where the size of any of the new subgraphs is at most $\frac{2n}{3}$. Therefore, the maximum possible height of the skip graph after a transformation is $\log _{\frac{3}{2}} n$.

\end{proof}
%\qed

\iffalse
\vspace{-0.2in}
\begin{lemma}
\label{NumofLevelsLemma}
\textsf{DSG} conforms to the self-adjusting skip graph model.
\end{lemma}
\vspace{-0.1in}
\begin{proof}
If any two nodes $u$ and $v$ in $S_t$ communicate at time $t$, there will be always a direct link between $u$ and $v$ in $S_{t+1}$ since both communicating nodes always move to the 0-subgraph until they become singleton. The height after any transformation is always $O(\log n)$ as shown in lemma \ref{HeightLemma}. The memory required for each node is $O(H_t)$, where $H_t$ is the height of the skip graph at time $t$. Thus each node requires $O( \log n)$ bits of memory. The size of any message sent by \textsf{DSG} or \textsf{AMF} in a round is $O (\log n)$ bits, conforming to the $\mathcal{CONGEST}$ model.
%\qed
\end{proof}
\fi

%\vspace{-0.3in}

\vspace{-0.1in}
\begin{theorem}
\label{WSTheorem}
At any time $t$, given that any two nodes $u$ and $v$ communicated earlier, the distance between $u$ and $v$ in skip graph $S_{t}$ is $O(\log T_t (u,v))$, where $T_t (u,v)$ is the working set number for the node pair $(u,v)$ at time $t$.
\end{theorem}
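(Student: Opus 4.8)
The plan is to prove the statement by induction on the time index $t$, carrying along the stronger invariant that in $S_t$ \emph{every} pair $(x,y)$ that has communicated strictly before time $t$ satisfies $d_{S_t}(x,y)=O(\log T_t(x,y))$. The base case is immediate, since in $S_1$ and $S_2$ the only non-singleton groups have size two. For the inductive step, fix the request $(u,v)$ served at time $t$, the highest common linked list $l_\alpha$ of $u$ and $v$, and the induced transformation $S_t\to S_{t+1}$. Because \textsf{DSG} rearranges only the nodes of $l_\alpha$ and only at levels above $\alpha$, a pair with neither endpoint in $l_\alpha$ keeps its routing path verbatim, and the invariant survives for it because $T_{t+1}(x,y)\ge T_t(x,y)$ for every pair other than $(u,v)$ (the working-set window only lengthens). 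The communicating pair $(u,v)$ is also easy: by Lemma \ref{UVHeightLemma} it lands on a length-two linked list, so $u$ and $v$ are joined by a direct edge and $d_{S_{t+1}}(u,v)=O(1)$. All the work is therefore in pairs $(x,y)\ne(u,v)$ with an endpoint in $l_\alpha$.

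For such a pair, let $d$ be the level in $S_{t+1}$ just below the one at which $x$ and $y$ first get separated, and let $L$ be the linked list of $S_{t+1}$ at level $d$ containing both of them; then $d_{S_{t+1}}(x,y)=O(\log|L|)$ by the $a$-balance property and Lemma \ref{HeightLemma}, so it suffices to bound $|L|$. The idea is that $|L|$ is within a constant factor of the size of the group $g$ that $x$ and $y$ share at level $d$: a non-communicating group is split at the next level only when it fills at least two thirds of its linked list, and the merged cluster of $u$ and $v$ is split only when its (positive) priorities form a majority at $L$, so that the approximate median $M$ comes out positive -- in either case $|L|\le 3|g|$. It then remains to show $|g|=O(T_{t+1}(x,y))$, and this is exactly what Lemma \ref{TimestampLemma} and the timestamp rules T1--T6 are for: applying Lemma \ref{TimestampLemma} to the member of $g$ with the oldest level-$d$ timestamp shows $g$ is connected in the communication graph restricted to a window, and the invariant maintained by T1--T6 forces that window inside the working-set window of $(x,y)$, so every node of $g$ is reachable from $x$ or $y$ and hence $|g|\le T_{t+1}(x,y)$. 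Chaining the bounds yields $d_{S_{t+1}}(x,y)=O(\log T_{t+1}(x,y))$.

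Two residual situations must be folded in. If the split that separates $x$ and $y$ is the negative-$M$ split of a non-communicating group that held both, Lemma \ref{NegativeM} gives the conclusion directly (its proof already packages Lemmas \ref{ApproxMedianLemma} and \ref{TimestampLemma}). If at level $d$ the nodes $x$ and $y$ lie in a common linked list but in \emph{different} groups, the ``$g\approx L$'' estimate fails, and one instead uses the design property that the placement rules P1--P4 never increase the distance between two nodes of non-communicating groups, so $d_{S_{t+1}}(x,y)\le d_{S_t}(x,y)=O(\log T_t(x,y))=O(\log T_{t+1}(x,y))$. I expect these last two points to be the main obstacle -- verifying the ``non-interference'' of P1--P4, and verifying that T1--T6 keep the connecting communications of every group confined to a window short enough to lie inside the pertinent working-set interval, which is the ``working-set faithfulness'' of the group decomposition that those elaborate rules exist to guarantee; a secondary check is that the constant hidden in $O(\cdot)$, which comes out of Lemmas \ref{HeightLemma} and \ref{UVHeightLemma} as roughly $a/\log\tfrac32$, does not drift upward as transformations accumulate. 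With the induction closed, combining this theorem with the lower bound of Theorem \ref{WSTheorem3} yields the advertised constant-factor optimality of \textsf{DSG}'s routing cost.
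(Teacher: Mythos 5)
Your proposal is correct in outline and follows the same overall strategy as the paper's proof: an induction over communication events that reduces the theorem to (i) the negative-$M$ split case, discharged by Lemma~\ref{NegativeM}, and (ii) a counting argument showing that the nodes which stay with the tracked pair up to the level where it is finally separated are all connected, via Lemma~\ref{TimestampLemma}, in a communication graph whose window lies inside the working-set window of that pair, and hence number at most $T_t(x,y)$. The organization differs, though: the paper cases on where the currently communicating pair $(u_t,v_t)$ sits relative to the linked list $l_z$ containing the tracked pair, and bounds the priority-threshold set $X=\{x\in l_\alpha \mid P(x)>\min(P(u),P(v))\}$ directly, whereas you case on which splitting mechanism first separates $x$ from $y$ and bound the containing linked list $L$ by a constant multiple of the containing group $g$ (using the two-thirds threshold for negative-$M$ splits and the rank guarantee of Lemma~\ref{ApproxMedianLemma} for positive-$M$ splits), and only then bound $|g|$ by the working-set number. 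The two routes are equivalent in substance; yours has the minor advantage of making the constant in $|L|=O(|g|)$ explicit, which the paper leaves implicit inside its set $X$. Be aware that the two steps you flag as the main obstacles --- that the placement rules never increase distances between nodes of distinct non-communicating groups, and that the timestamp rules confine each group's connecting window inside the working-set window of every pair it contains --- are exactly the steps the paper's own proof also asserts without detailed verification (Case~3's ``the rearrangement will not increase the distance \dots unless the group splits due to a negative $M$,'' and the closing appeal to the definition of the working set number), so your proposal is at the same level of rigor as the published argument rather than containing a gap the paper avoids.
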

\vspace{-0.1in}
\begin{proof}

Let $t^\prime$ be a time between $t$ and the last time $u$ and $v$ communicated. Let $k$ be the longest common postfix between $m(u)$  and $m(v)$ in skip graph $S_{t^\prime}$, then the distance between $u$ and $v$ in $S_{t^\prime}$ is at most $ak$. Suppose node $u_t$ communicates with node $v_t$ at time $t^\prime$. Let $z = H_t - k$, then there exists a linked list $l_z$ at level $z$ such that $u \in l_z$, $v \in l_z$. Now one of the following four cases must occur:

%Suppose $l_z$ is split into two linked lists $l_{z0}$ and $l_{z1}$ at level $z+1$ in $S_t$.

Case 1: $u_t \in l_z$, $v_t \in l_z$, and $u_t$ and $v_t$ are in the same  linked lists at level $z+1$. Clearly, the distance between $u$ and $v$ remains unchanged in skip graph $S_{t+1}$.

Case 2: $u_t \in l_z$, $v_t \in l_z$, and $u_t$ and $v_t$ are in two different linked lists at level $z+1$. Clearly, the distance between $u$ and $v$ cannot be increased in skip graph $S_{t+1}$.

Case 3: $u_t \in l_z$, $v_t \notin l_z$ or vice versa. We analyze only the case when $u_t \in l_z$, $v_t \notin l_z$, and analysis for the opposite case $u_t \notin l_z$, $v_t \in l_z$ is similar. Since $u$ and $v$ communicated earlier, they must hold the same group-id for level $z$, i.e. $G^u_z = G^v_z$. Let $g^u_z$ be the group at level $z$ that contains both nodes $u$ and $v$. Let $\alpha$ be the highest common level in $S_{t^\prime}$ for communicating nodes $u_t$ and $v_t$. Then if $u_t \not \in g^u_z$, the rearrangement will not increase the distance between $u$ and $v$ unless the group $g^u_z$ splits at level $z$ due to a negative $M$ (case 2 in Section \ref{sec:transformation}). According to Lemma \ref{NegativeM}, the distance between $u$ and $v$ in skip graph $S_{t^\prime + 1}$ remains $O(\log T_{t^\prime + 1} (u,v))$, even if group $g^u_z$ splits at level $z$ due to a negative $M$.

Now, if $u_t \in g^u_z$, let $X = \{x \in l_\alpha | P(x) > \min (P(u),P(v))\}$. we argue that $|X| \leq T_{t^\prime + 1} (u,v)$, which proves the lemma for this case since $d_{S_{t^\prime+1}} (u,v) = O(\log |X|)$. Let $l^u_{\alpha+1}$ and $l^v_{\alpha+1}$ be the linked lists in $S_{t^\prime + 1}$ at level $\alpha + 1$ such that $u_t \in l^u_{\alpha+1}$ and $v_t \in l^v_{\alpha+1}$. We construct a communication graph $G^\prime$ for communications during the time period between the time $\min (P(u),P(v))$ and the current time $t$ (inclusive). According to Lemma \ref{TimestampLemma}, all nodes $x \in l^u_{\alpha+1}$ with $P(x) > \min (P(u),P(v))$ are connected in the communication graph $G^\prime$. Similarly, all nodes $x \in l^v_{\alpha+1}$ with $P(x) > \min (P(u),P(v))$ are connected in communication graph $G^\prime$ as well. Now since nodes $u_t$ and $v_t$ communicate at time $t^\prime$, all nodes $x \in X$ must be connected in $G^\prime$. Based on our definition of working set number, $|X| \leq T_{t^\prime + 1} (u,v)$ follows.

Case 4: $u_t \notin l_z$, $v_t \notin l_z$. Two possible scenarios under this case: (1) neither node $u_t$ nor node $v_t$ belongs to the group of nodes $u$ and $v$ at any level; and (2) node $u_t$, or node $v_t$, or both nodes $u_t$ and $v_t$ belong to the group of nodes $u$ and $v$ at a level lower than $z$. The first scenario is equivalent to the scenario $G^{u_t}_z \neq G^u_z$ described in case 3. To analyze the second scenario, let $z^\prime$ be the level such that $G^{u_t}_{z^\prime} = G^u_{z^\prime} = G^u_{z^\prime}$. Now as transformation takes place recursively at different levels, the scenario is equivalent to the scenario $u_t \in g^u_z$ described in case 3 as long as both nodes $u$ and $v$ move to the 0-subgraph. However, if both nodes $u$ and $v$ move to the 1-subgraph at some level, the scenario becomes equivalent to the scenario $G^{u_t}_z \neq G^u_z$ described in case 3.

\iffalse
assuming that $u_t$ is the closest communicating node at time $t$, according to the priority rule $P2$, initial priorities for nodes $u$ and $v$, $P(u) = T^u_{z^\prime}$ and $P(v) = T^v_{z^\prime}$. We construct a communication graph $G^\prime$ for all communications during the time period from the last time nodes $u$ and $v$ communicated to the current time $t$. Let Y be the set of nodes that have a path from nodes $u$ and $v$ in $G^\prime$. Our Timestamp rules are designed to ensure that, for any node $x \in Y$, initial priorities for transformation at time $t$ be $P(u) \leq P(x) \leq P(v)$ if $P(u) \leq P(v)$, or $P(u) \geq P(x) \geq P(v)$ if $P(u) \geq P(v)$. Thus the lemma follows.
\fi

%\qed
\end{proof}

\vspace{-0.1in}
\begin{theorem}
\label{WSTheorem2}
Given a communication sequence $\sigma$, the expected running time of algorithm \textsf{DSG} is $(WS(\sigma))^2$ rounds.
\end{theorem}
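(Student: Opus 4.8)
The plan is to bound the expected cost of serving an individual request $\sigma_i=(u_i,v_i)$ by $O\big((\log T_i(\sigma_i))^2\big)$ and then sum over $i$. Since every $\log T_i\geq 0$, we have $\sum_{i=1}^m(\log T_i)^2\leq(\sum_{i=1}^m\log T_i)^2=(WS(\sigma))^2$, so by linearity of expectation the total expected running time is $O\big((WS(\sigma))^2\big)$, which is the assertion (up to the constant factor suppressed in the theorem statement). The cost of serving $\sigma_i$ is $d_{S_i}(\sigma_i)+\rho(\textsf{DSG},S_i,\sigma_i)+1$, so it suffices to bound the routing term and the transformation term separately by $O\big((\log T_i)^2\big)$.

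The routing term is the easy part. If $u_i$ and $v_i$ have communicated before, Theorem~\ref{WSTheorem} gives $d_{S_i}(\sigma_i)=O(\log T_i(\sigma_i))$ directly; if $\sigma_i$ is their first communication then $T_i=n$ and the $a$-balance property gives $d_{S_i}(\sigma_i)\leq a\log n=O(\log T_i)$. Either way the routing cost and the additive constant are dominated by $(\log T_i)^2$.

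For the transformation term, recall that \textsf{DSG} touches only the nodes of the list $l_\alpha$ at the highest common level $\alpha$ of $u_i$ and $v_i$, and proceeds level by level (recursively and in parallel across the lists created at each level) through the subgraph rooted at $l_\alpha$. By Lemma~\ref{HeightLemma} each split shrinks a list by a factor at least $3/2$, so this subgraph has height at most $\log_{3/2}|l_\alpha|$, i.e. at most $O(\log|l_\alpha|)$ levels are involved. On a list of size $n'\leq|l_\alpha|$ the work at one level is dominated by a call to \textsf{AMF} (expected $O(\log n')$ rounds), together with the distributed counts of $|g_s|,L_{low},L_{high}$ and the new-group-id broadcast, carried out on the balanced skip list built by \textsf{AMF} in $O(\log n')$ rounds, plus the $O(a)=O(1)$ neighbor search and the purely local timestamp/group-id updates; so each level costs $O(\log|l_\alpha|)$ expected rounds, and since the number of levels is bounded \emph{deterministically}, linearity of expectation gives $\mathbb{E}[\rho(\textsf{DSG},S_i,\sigma_i)]=O\big((\log|l_\alpha|)^2\big)$. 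It remains to prove the key inequality $\log|l_\alpha|=O(\log T_i(\sigma_i))$. For a first communication this is immediate ($|l_\alpha|\leq n=T_i$). Otherwise, as shown in the proof of Theorem~\ref{WSTheorem}, $u_i$ and $v_i$ share a group-id at level $\alpha$; call that group $g$. Because $\alpha$ is the \emph{highest} common level, $u_i$ and $v_i$ lie in different sublists at level $\alpha+1$, so $g$ straddles both sublists of $l_\alpha$; but in \textsf{DSG} a non-communicating group is split across the two sublists of a list $l_d$ only in the branch $|g_s|>\tfrac23|l_d|$, so $|g|>\tfrac23|l_\alpha|$, i.e. $|l_\alpha|<\tfrac32|g|$. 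Finally, Lemma~\ref{TimestampLemma} together with timestamp rules T1--T2 (which ensure every member of $g$ carries a level-$\alpha$ timestamp no earlier than the last time $u_i$ and $v_i$ communicated) shows that every node of $g$ is connected to $u_i$ or $v_i$ in the communication graph over the interval from that last communication to time $i$; by the definition of the working set number this yields $|g|\leq T_i(\sigma_i)$, hence $|l_\alpha|=O(T_i)$ and $\mathbb{E}[\rho(\textsf{DSG},S_i,\sigma_i)]=O\big((\log T_i)^2\big)$.

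The main obstacle will be this last step — relating $|l_\alpha|$ to the working set number. One has to argue that the only way $g$ can be split across the two sublists of $l_\alpha$ is through the $|g_s|>\tfrac23|l_d|$ case (ruling out, via the role of the \emph{is-dominating-group} bits, any other way a non-communicating group is separated), and that the timestamps inside $g$ are uniformly at least the last-communication time of $u_i$ and $v_i$, so that Lemma~\ref{TimestampLemma} can be invoked with the correct reference time; several of the timestamp rules (T2, T3, T5, T6) interact here and must be checked jointly. Everything else — the level-by-level accounting and the use of \textsf{AMF}'s expected bound — is routine once the height bound of Lemma~\ref{HeightLemma} is available.
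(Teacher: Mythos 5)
Your overall skeleton matches the paper's: the paper's own argument for this theorem is only a three-sentence sketch observing that the dominant cost is the per-level call to \textsf{AMF}, each costing expected $O(h)$ rounds, over the levels of the recursion. Everything you add beyond that --- the split into routing and transformation cost, the inequality $\sum_i(\log T_i)^2\le\big(\sum_i\log T_i\big)^2$, and above all the attempt to bound $|l_\alpha|$ by $O(T_i(\sigma_i))$ --- is absent from the paper, and the last item is genuinely needed: without it the per-request bound is only $O((\log n)^2)$, which does not imply $(WS(\sigma))^2$ when the working set numbers are small. So your proof is more complete than the paper's, not merely a restatement of it.

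That said, the step you yourself flag as the obstacle does have a hole. You argue that the level-$\alpha$ group $g$ containing $u_i$ and $v_i$ straddles both sublists of $l_\alpha$, and that this can only happen through the $|g_s|>\frac{2}{3}|l_d|$ branch, whence $|g|>\frac{2}{3}|l_\alpha|$. But a group can also be separated across the two sublists by Case~1 (a positive approximate median $M$ splitting the \emph{merged communicating} group of some earlier request), after which the two halves can still share their level-$\alpha$ group-id even though they receive distinct group-ids at level $\alpha+1$; your dichotomy ignores this route. (It is repairable: a positive $M$ forces, via Lemma~\ref{ApproxMedianLemma} and rules P1--P3, at least $\frac{n'}{2}-\frac{n'}{2a}$ nodes of the list to lie in the communicating group, so one still gets $|l_\alpha|=O(|g|)$, just with a different constant.) A second unaddressed point is temporal: both inequalities are established at the moment of the split, while you need them at time $i$; intervening transformations can change both $|l_\alpha|$ and the membership of $g$, so you must argue the ratio is preserved or re-derive the bound from the state at time $i$. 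Finally, $|g|\le T_i(\sigma_i)$ requires that every member of $g$ carries a timestamp no earlier than the last $(u_i,v_i)$ communication so that Lemma~\ref{TimestampLemma} applies with the correct reference time; you name the relevant rules but do not verify this, and since the paper does not either, it remains the weakest link of the argument.
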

\vspace{-0.1in}
The proof relies on the fact that the most expensive operation performed by \textsf{DSG} is to find the approximate median priority for all involved levels for communication request $\sigma_t$. The most expensive operation performed by a single call of algorithm \textsf{AMF} is to construct the balanced skip list. Due to the randomization involved in construction, the expected time to construct a balanced skip list is $O(h)$, where $h$ is the height of the skip list.

%\vspace{-0.1in}

\vspace{-0.05in}
\begin{theorem}
\label{WSTheorem4}
The routing cost for algorithm \textsf{DSG} is at most a constant factor more than the the amortized routing cost of the optimal algorithm.
\end{theorem}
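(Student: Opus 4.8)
The plan is to compare the routing cost that \textsf{DSG} pays on an arbitrary sequence $\sigma=(\sigma_1,\dots,\sigma_m)$ against the lower bound $WS(\sigma)$ from Theorem~\ref{WSTheorem3}. Since \textsf{DSG} serves each request by standard skip graph routing, the routing cost of request $\sigma_i=(u_i,v_i)$ is exactly $d_{S_i}(u_i,v_i)+1$. So the first step is to establish, for every $i$, that $d_{S_i}(u_i,v_i)+1 = O(\log T_i(\sigma_i))$; summing over $i$ then finishes the argument.

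To bound a single term I would split into two cases. If $u_i$ and $v_i$ have communicated before time $i$, Theorem~\ref{WSTheorem} gives directly $d_{S_i}(u_i,v_i)=O(\log T_i(u_i,v_i))$, and since $T_i(u_i,v_i)\ge 2$ the additive $+1$ is absorbed into the constant. If $\sigma_i$ is the first communication between $u_i$ and $v_i$, then by definition $T_i(\sigma_i)=n$; here I would invoke the $a$-balance property (Section~\ref{sec:model}), which every skip graph produced by \textsf{DSG} satisfies, so that the search path between any pair of nodes has length at most $a\log n$, whence $d_{S_i}(u_i,v_i)\le a\log n=O(\log T_i(\sigma_i))$. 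One detail to check here is that the dummy nodes of Section~\ref{sec:a-bal} inflate the node count by at most a factor $1+1/a$, which perturbs the bound only by a constant. Thus there is a universal constant $c$ with $d_{S_i}(u_i,v_i)+1\le c\log T_i(\sigma_i)$ in all cases.

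Summing over $i=1,\dots,m$ yields a total routing cost for \textsf{DSG} of at most $c\sum_{i=1}^{m}\log T_i(\sigma_i)=c\cdot WS(\sigma)$. On the other hand, Theorem~\ref{WSTheorem3} states that for the same unknown sequence $\sigma$ the amortized routing cost of any algorithm conforming to our model --- in particular the optimal one --- is at least $WS(\sigma)$. Combining the two bounds, the routing cost of \textsf{DSG} exceeds that of the optimal algorithm by at most the constant factor $c$, which is the claim.

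The hard part is already discharged by Theorem~\ref{WSTheorem}: that theorem is exactly the statement that \textsf{DSG}'s restructuring keeps every previously communicating pair within $O(\log T_t(u,v))$ of one another, and its proof (via Lemmas~\ref{TimestampLemma} and \ref{NegativeM}) carries the real weight. Given that theorem, the present statement is a short accounting step, and the only subtleties to be careful about are the first-contact requests (where we must fall back on $a$-balance rather than the working set bound) and the dummy-node overhead, so that the per-request bound is genuinely $O(\log T_i(\sigma_i))$ with no hidden dependence on $n$ beyond what $T_i$ itself already permits.
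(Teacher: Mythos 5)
Your proposal is correct and follows essentially the same route as the paper, whose proof of this theorem is simply ``Follows from Theorems~\ref{WSTheorem3} and~\ref{WSTheorem}.'' You have merely made explicit the accounting the paper leaves implicit (summing the per-request bound from Theorem~\ref{WSTheorem}, handling first-contact pairs via the $a$-balance property, and noting the dummy-node overhead), which is a faithful elaboration rather than a different argument.
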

\vspace{-0.1in}
\begin{proof}
Follows from Theorems \ref{WSTheorem3} and \ref{WSTheorem}.
\end{proof}

\vspace{-0.1in}
\begin{theorem}
\label{WSTheorem5}
The cost for algorithm \textsf{DSG} is at most logarithmic factor more than the amortized cost of the optimal algorithm.
\end{theorem}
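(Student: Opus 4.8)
The plan is to account for the total cost of \textsf{DSG} on $\sigma$ --- the sum over the $m$ requests of $d_{S_t}(\sigma_t)+\rho(\textsf{DSG},S_t,\sigma_t)+1$ --- and compare it with the cost of the optimal algorithm, which by Theorem~\ref{WSTheorem3} has total routing cost at least $WS(\sigma)=\sum_t\log T_t(\sigma_t)$ and trivially total cost at least $m$; note also $WS(\sigma)\ge m$ since $T_t(\sigma_t)\ge 2$ always. The routing term is already handled: Theorem~\ref{WSTheorem} gives $d_{S_t}(\sigma_t)=O(\log T_t(\sigma_t))$, so the total routing cost of \textsf{DSG} is $O(WS(\sigma))$. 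Hence it suffices to prove that the total expected transformation cost $\sum_t\rho(\textsf{DSG},S_t,\sigma_t)$ is $O(\log n\cdot WS(\sigma))$.

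First I would bound the expected transformation cost of a single request $\sigma_t=(u,v)$. The transformation rebuilds only the subtree rooted at $l_\alpha$, the highest common list of $u$ and $v$; by the $a$-balance property (and the $\tfrac23$-split rule analyzed in Lemma~\ref{HeightLemma}) this subtree has depth $O(\log|l_\alpha|)$, and its linked-list sizes decay geometrically going up (the larger child of a size-$s$ list has size at most $\tfrac{a}{a+1}s+O(1)$). At each level \textsf{DSG} invokes \textsf{AMF}, which --- by the argument behind Theorem~\ref{WSTheorem2} --- builds its balanced skip list in expected $O(\log s)$ rounds on a list of size $s$, and the remaining per-level work (priority recomputation, the distributed counts of $|g_s|,|L_{low}|,|L_{high}|$ on that same skip list, linear neighbor search, dummy-node insertion) is $O(\log s)$ as well. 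Summing $O(\log s)$ over the $O(\log|l_\alpha|)$ levels with geometrically shrinking $s$ gives an expected transformation cost of $O\big((\log|l_\alpha|)^2\big)$ for $\sigma_t$.

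The hard part will be showing $\log|l_\alpha|=O(\log T_t(\sigma_t))$, i.e.\ that the lowest common level of $u$ and $v$ cannot sit far below the level at which \textsf{DSG} last placed them together, so that the rebuilt subtree has only $\mathrm{poly}(T_t(\sigma_t))$ nodes. The idea is that this common level descends only through transformations that restructure the subtree currently containing $u$ and $v$, and each such transformation is triggered by a communication whose participants are, by Lemma~\ref{TimestampLemma} and the timestamp rules T1--T6, joined to $u$ or $v$ in the communication graph that defines $T_t(u,v)$; charging the growth of that subtree to those $T_t(\sigma_t)$ working-set nodes --- the is-dominating-group invariant of Lemma~\ref{NegativeM} guaranteeing that non-communicating groups inside the subtree are never fragmented in a distance-increasing way --- yields $|l_\alpha|=O(T_t(\sigma_t))$, hence an expected transformation cost of $O\big((\log T_t(\sigma_t))^2\big)$ for $\sigma_t$.

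Finally I would sum over the requests:
\[
\sum_{t=1}^{m}O\big((\log T_t(\sigma_t))^2\big)\;\le\;\Big(\max_{t}\log T_t(\sigma_t)\Big)\sum_{t=1}^{m}\log T_t(\sigma_t)\;\le\;\log n\cdot WS(\sigma),
\]
using $T_t(\sigma_t)\le n$. Adding the $O(WS(\sigma))$ routing contribution and the $m\le WS(\sigma)$ term, the total cost of \textsf{DSG} on $\sigma$ is $O(\log n\cdot WS(\sigma))$; dividing by $m$ and invoking Theorem~\ref{WSTheorem3}, the average cost of \textsf{DSG} is within an $O(\log n)$ factor of $WS(\sigma)/m$ and hence of the amortized cost of any conforming algorithm, in particular the optimal one. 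The one step that genuinely needs the detailed \textsf{DSG} invariants is the bound $|l_\alpha|=O(T_t(\sigma_t))$; everything else is bookkeeping around Theorems~\ref{WSTheorem3}, \ref{WSTheorem} and the \textsf{AMF} running-time analysis.
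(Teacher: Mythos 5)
Your proof is correct and follows the same skeleton as the paper's, which simply cites Theorems \ref{WSTheorem3} and \ref{WSTheorem2}: lower-bound the optimal algorithm by $WS(\sigma)$ and upper-bound \textsf{DSG}'s work by the cost of the per-level \textsf{AMF} calls. The substantive difference is that you do the accounting per request, bounding $\rho(\textsf{DSG},S_t,\sigma_t)=O\big((\log T_t(\sigma_t))^2\big)$ and summing via $\sum_t(\log T_t(\sigma_t))^2\le\big(\max_t\log T_t(\sigma_t)\big)\cdot WS(\sigma)\le\log n\cdot WS(\sigma)$, whereas the paper invokes the aggregate bound of Theorem \ref{WSTheorem2}, stated as $(WS(\sigma))^2$ total rounds. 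Your version is in fact the one that delivers the theorem as stated: dividing $(WS(\sigma))^2$ by the lower bound $WS(\sigma)$ only yields a multiplicative gap of $WS(\sigma)\le m\log n$, not $\log n$, so the per-request decomposition is what actually justifies the ``logarithmic factor'' claim (and is presumably what Theorem \ref{WSTheorem2} is intended to encode). One caveat on the step you rightly flag as the crux, $\log|l_\alpha|=O(\log T_t(\sigma_t))$: Theorem \ref{WSTheorem} as stated bounds only the routing distance, not the size of the lowest common linked list of $u$ and $v$, so it cannot be cited verbatim; however, its proof establishes the stronger invariant you need --- the set $X$ of nodes that accompany $u$ and $v$ into the $0$-subgraphs satisfies $|X|\le T_t(u,v)$, so the smallest common list of a previously-communicating pair has size $O(T_t(u,v))$ --- and your sketch via Lemma \ref{TimestampLemma} and the is-dominating-group invariant is the right way to extract it.
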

\vspace{-0.1in}
\begin{proof}
Follows from Theorems \ref{WSTheorem3} and \ref{WSTheorem2}.
\end{proof}

\presec
\section{Conclusion} \label{sec:conclusion}
\postsec

We present a self-adjusting algorithm for skip graphs that relies on the idea of grouping frequently communicating nodes at different levels and using timestamps to determine nodes' attachments with their groups. We believe this study will lead to a general framework for distributed data structures with overlapping tree-like structures. 

Our algorithm \textsf{DSG} can be useful in networks where multiples levels are involved. For example, VM migration problem in data centers with levels such as rack-level, intra- and inter-data-center level, inter-continental level etc. Moreover, while the amortized routing time for most self-adjusting data structures is $O(\log n)$, our algorithm \textsf{DSG} guarantees $O(\log n)$ routing time for each of the individual communication requests. Thus, compared to most other self-adjusting networks, \textsf{DSG} is better suited for the cases where there is a time limit associated with each of the communications.
%

%\vspace{-0.17in}

%\vfill\eject
\bibliographystyle{abbrv}
\bibliography{sigproc}
\vfill\eject
\appendix
\section{Appendix} \label{sec:appen}
\presec
\subsection{Frequently Used Notations} \label{sec:appen_notation_table}
\postsec
%Table \ref{tab:notation} lists the frequently used notations.
\begin{table}[htbp]
\scriptsize
%\centering
\begin{tabular}{|c|l|}
\hline
\textbf{Notation} & \textbf{Description} \\[1ex]
%heading
\hline
$m(x)$ & membership vector of node x \\
$V^i_j$ & \textit{membership vector bit} of node i for level j \\
$T^i_j$ & \textit{timestamp} for node i at level j \\
$G^i_j$ & \textit{group-id} for node i at level j\\
$D^i_j$ & \textit{is-dominating-group} for node i at level j\\
$B_x$ & \textit{group-base} of node x \\
$l_a$ & a linked list at level a  \\
$S_t$ & skip graph at time t\\
$M$ & approximate median priority\\
$T_i(x,y)$ & The working set number for node pair $(x,y)$ at time $i$\\
$WS(\sigma)$ & $\sum_{i=1}^{m}$ log$( T_i(\sigma_i))$, where $\sigma = \sigma_1, \sigma_2, \dots \sigma_m$\\
$\sigma_i$ & Communication request $(u_i,v_i)$ at time $i$ \\
$d_{S_i} (x,y)$ & Distance between nodes $x$ and $y$ in skip graph $S_i$ \\

\hline
\end{tabular}
\caption{Notations}
\label{tab:notation}
\end{table}

\presec
\subsection{Standard Skip Graph Routing} \label{sec:appen_sg_routing}
\postsec

The standard skip graph routing \cite{SG} works as follows. Routing starts at the top level from the source node and traverses through the skip graph structure. If the identifier of the destination node is greater than that of the source node, then at each level, routing moves to the next right node until the identifier of the next node is greater then the identifier of the destination node. When a node with an identifier greater than the destination node is found, the routing drops to the next lower level, continuing until the destination node is found. If the identifier of the destination nodes is smaller than that of the source node, routing takes place in the similar manner expect it moves to the next left node instead of right, and drops to the lower level when a node with smaller (instead of greater) identifier is found.

%\presec
%\subsection{Example of Working Set Bound} \label{sec:appen_ws_example}
%\postsec

\presec
\subsection{Updating group-ids for levels below $\alpha$} \label{sec:appen_group_id}
\postsec

\textsf{DSG} requires each node to store a number (between 0 and $H_t$), referred to as \emph{group-base}. The group-base for a node is the highest level at which the node belongs to its biggest group. For example, in the skip graph $S_8$ in Figure \ref{fig:example}(b), the group-bases for nodes H,F,B and G are 3,2,1 and 1, respectively. We use the notation $B_x$ to denote the group-base of any node $x$. Initially, before any communication, group-base for each node is set to the lowest level at which the node is singleton. Also, the notification message sent after routing includes the group-bases for both nodes $u$ and $v$, along with group-ids, timestamps, and membership vectors.

Each node $x \in l_\alpha$ with $G^x_\alpha = u$  initializes a vector $G_{lower}$ as follows:
%\vspace{-0.05in}
\[
    G_{lower}=
\begin{cases}
    [G^u_{0}, G^u_{1}, \dots,  G^u_{\alpha-1}],& \text{if } B_u \leq B_v\\
    [G^v_{0}, G^v_{1}, \dots,  G^v_{\alpha-1}],& \text{otherwise}
\end{cases}
\]

Node $u$ broadcasts a message  $\big \langle G_{lower}, \min (B_u, B_v), G^u_{\max (B_u, B_v)}, G^v_{\max (B_u, B_v)} \big \rangle$ to all nodes $y \in l_{\max (B_u, B_v)}$ such that $u,v \in l_{\max (B_u, B_v)}$. Each such node $y$ with $G^y_{\max (B_u, B_v)} = G^u_{\max (B_u, B_v)}$ or $G^y_{\max (B_u, B_v)} = G^v_{\max (B_u, B_v)}$ updates their group-base by setting $B_y = \min (B_u, B_v)$ and updates group-ids $G^y_i = G_{lower}[i]$ for $ i = 0, 1, \dots \alpha - 1$.

Regardless of the outcome of the comparison $\min (B_u, B_v) < \alpha$, each node $x \in l_\alpha$ with $G^x_\alpha = u$ sets group-ids $G^x_i = G_{lower}[i]$ for $ i = 0, 1, \dots \alpha - 1$. Moreover, each node $x \in l_\alpha$ updates its group-base $B_x$ as follows:
%\vspace{-0.01in}
\begin{itemize}

\item[--] If $x$'s group at any level $d$ ($d \geq \alpha$) splits into 2 subgroups due to transformation, and if $B_x = d$, $x$ sets $B_x = B_x - 1$.

\item[--] Let $d$ be the lowest level at which $x$'s group splits due to transformation. if $B_x = \alpha$ and $d > \alpha+1$, $x$ sets $B_x = d - 1$.

\end{itemize}

It is important to understand that if $G^u_{\alpha-1} \neq G^v_{\alpha-1}$, then the working set number for the node pair $(u,v)$ is greater than the routing distance for $(u,v)$.

\iffalse

\presec
\subsection{Approximate Median Finding (AMF) Algorithm} \label{sec:appen_amf}
\postsec

\vfill\eject
\presec
\subsection{Summary of Algorithm \textsf{DSG}} \label{sec:appen_dsg}
\postsec

\fi

\presec
\subsection{Distributed Sum Using a Skip List} \label{sec:appen_dist_count}
\postsec
Each node holds a number and we want to compute the sum of the numbers held by all the nodes. Each node of the base level of the skip list forwards their number to the nearest neighbor that steps up to the upper level of the skip list. Any node receiving numbers from the neighbors from lower level computes the sum of the numbers and forwards the sum to the nearest neighbor stepping up to the upper level. As this happens recursively at each level, the head node of the skip list computes the final sum in $O(\log n)$ rounds and then broadcasts the sum to all the nodes.

\end{document}